    \theoremstyle{definition}
    \newtheorem{lemma}{Lemma}
\begin{document}\fontsize{10}{11.7}\rm
    \IEEEoverridecommandlockouts 

    \title{On the Transient Behavior of CHOKe}
    \author{%
    Addisu Eshete and Yuming Jiang\\
    \authorblockA{%
    Centre for Quantifiable Quality of Service in Communication Systems\authorrefmark{1}\thanks{\textsuperscript{\footnotesize\ensuremath{*}}``Centre for Quantifiable Quality of Service in Communication Systems, Centre of Excellence'' appointed by The Research Council of Norway, funded by the Research Council, NTNU and UNINETT. http://www.q2s.ntnu.no} and Department of Telematics\\ Norwegian University of Science and Technology, Trondheim, Norway\\
    \texttt{addisu.eshete@q2s.ntnu.no \:\: ymjiang@ieee.org}
      }
    }
    \maketitle

    \begin{abstract}
CHOKe is a simple and stateless active queue management (AQM) scheme. Apart from low operational overhead, a highly attractive property of CHOKe is that it can protect responsive TCP flows from unresponsive UDP flows. Particularly, previous works have proven that CHOKe is able to bound both bandwidth share and buffer share of (a possible aggregate) UDP traffic (flow) on a link. However, these studies consider, and pertain only to, a steady state where the queue reaches equilibrium in the presence of many (long-lived)  TCP flows and an unresponsive UDP flow of fixed arrival rate. If the steady state conditions are perturbed, particularly when UDP traffic rate changes over time, it is unclear whether the protection property of CHOKe still holds. Indeed, it can be examined, for example, that when UDP rate suddenly becomes 0 (i.e., flow stops), the unresponsive flow may assume close to full utilization in sub-RTT scales, potentially starving out the TCP flows. To explain this apparent discrepancy, this paper investigates CHOKe queue properties in a transient regime, which is the time period of transition between two steady states of the queue, initiated when the  rate of the unresponsive flow changes. Explicit expressions that characterize flow throughputs in transient regimes are derived. These results provide additional understanding of CHOKe, and give some explanation on its intriguing behavior in the transient regime. 

    \end{abstract}

\begin{keywords} Queue Management, RED, CHOKe, TCP, Flow Protection \end{keywords}

\section{Introduction}\label{sec:intro}

\subsection{Overview on Flow Protection}

Broadly speaking, there are two distinct but complementary ways to enforce flow fairness and protection in the Internet. Following the end-to-end architectural design principle of the Internet ~\cite{e2eArgument84}, the more classical way has been via congestion control algorithms~\cite{RFC5681}. These algorithms are typically implemented in the transport protocols (e.g., TCP) of end hosts. To ensure \emph{global} fairness, such schemes require all users to adopt them and respond to network congestion properly. However, this requirement can hardly be met for at least two reasons. Firstly, there is no performance incentive to end users. This is because users who lack the congestion control algorithms, intentionally or otherwise, may end up with a lion share of bandwidth. Secondly, in order to meet real-time requirements, many applications do not implement congestion control. Hence, to protect responsive (e.g., TCP) flows from unresponsive (e.g., UDP) ones, solely relying on the end-to-end schemes can be unfair or risky, and it is necessary to introduce some mechanisms in the network. This motivates the second approach to fairness and protection. 

The second approach is provided through router mechanisms. Such a router mechanism can be either (1) per flow fair queueing (PFFQ) scheme,  e.g., Weighted Fair Queueing (WFQ) \cite{demersWFQ89}, or (2) queue management (QM) scheme, e.g., Random Early Detection (RED) ~\cite{REDfloyd93,adaptiveRED2001}. PFFQ schemes share link bandwidth among flows in a fair manner. Typically, they isolate flows into separate logical or physical FIFO queues and maintain flow-level state information. By building firewalls around heavy users, flow isolation protects well-behaved flows and enables performance guarantees to such flows~\cite{keshavBook97}. Nevertheless, the maintenance of per flow state and the dynamic management of complex queue structure are widely believed to be problematic\footnote{Proponents of per flow architecture argue that the number of flows requiring scheduling at any moment is a lot less than the total number traversing the router. Hence, it is possible to realize scalable implementation of the per flow architecture~\cite{kortebiSIGMETRICS05, flowBufferJSAC99, esheteAFpFT11, perFlowDQS07}.} for high speed implementation.

Queue management (QM) schemes specialize in buffer allocation and are normally significantly simpler in design, typically with a single queue shared by all flows. Among QM schemes, RED~\cite{REDfloyd93,adaptiveRED2001} is probably the most widely known scheme. It maintains an exponentially moving average queue size that indicates the level of congestion in the router. A congested RED router drops incoming packets with a certain probability dependent on the queue size. Since the dropping probabilities are applied globally to all flows, both high rate and low rate flows can be punished in equal measures. In fact, based on the nature of Internet flows (e.g., flow sizes, underlying transport protocol) and resultant differences in their responsiveness to congestion, the same ambient drop rate of RED can be more detrimental and highly unfair to some flows. To deal with this unfairness, more complex variants of RED, e.g.,  Flow RED (FRED)~\cite{fred97}, RED with Proportional Differentiation (RED-PD)~\cite{ratulICNP01}, have been proposed to apply differential per flow drop rates. However, such schemes typically need to maintain \emph{partial flow state} to be able to discriminate drop rates among flows.

With no flow isolation, fairness afforded by a QM scheme is generally approximate. For example, a recent simulation study consisting entirely of UDP flows shows that the bandwidth allocated by FRED to congested UDP flows can differ by several factors~\cite[see Fig.~6]{esheteAFpFT11}. This begs the question: \textbf{How much of the bandwidth, or generally the shared resources, can an unresponsive / rogue flow steal in QM schemes?}

\subsection{CHOKe}\label{subsec:introChoke}

A highly novel QM scheme which, unlike FRED or RED-PD, does not require flow state to be maintained in the router, is CHOKe~\cite{chokeInfocom00}. CHOKe is proposed to protect rate-adaptive (responsive) flows from unresponsive ones. It uses the recent admissions, i.e. packets currently queued in the buffer, to penalize the high bandwidth flows. It can be implemented by a few tweaks of the RED algorithm. Specifically,\emph{``when a packet arrives at a congested router, CHOKe draws a packet at random from the FIFO buffer and compares it with the arriving packet. If they both belong to the same flow, then they are both dropped; else the randomly chosen packet is left intact and the arriving packet is admitted into the buffer with a probability (based on RED) that depends on the level of congestion~\cite{chokeInfocom00}.''}



A promising property of CHOKe is that it provides analytically proven protection of responsive flows from an unresponsive flow at the congested router, which provides an answer to the aforementioned question. Specifically, in the presence of many flows, the following interesting and peculiar steady-state properties of CHOKe have been derived~\cite{ChokeToN04,ChokeSigmetrics03, allertonCHOKe01}:
\begin{itemize}
    \item \emph{Limits}: An unresponsive UDP flow cannot exceed certain limits in buffer share and link bandwidth share \cite{ChokeSigmetrics03, allertonCHOKe01}. The maximum UDP bandwidth share is $(e+1)^{-1}=26.9\%$ of link capacity, and the maximum buffer share is $50\%$.
    \item \emph{Asymptotic property}: As the UDP rate increases without bound, its buffer share can asymptotically reach $50\%$ and queueing delay can be reduced by \emph{half}, but its link utilization drops to \emph{zero}. 
    \item \emph{Spatial distribution}:  The spatial packet distribution in the queue can be highly nonuniform~\cite{ChokeToN04}. The probability of finding  a packet belonging to a high rate flow  in the queue diminishes dramatically as we move towards the head of the queue. Correspondingly, the flow distribution in queue is skewed with most packets of high rate flows found closer to queue tail while packets of low rates are found closer to queue head. 
\end{itemize}

    \subsection{Contribution of This Paper}

    To the best of our knowledge, previous analytical studies on CHOKe \cite{ChokeToN04, ChokeSigmetrics03, allertonCHOKe01} are restricted to the steady state where the traffic rate of the UDP flow is assumed constant.  However, this assumption is too restrictive, limiting more in-depth understanding of CHOKe. This paper studies CHOKe behavior in the face of dynamically changing UDP rates and at the same time generalizes the steady-state properties proved in the earlier works. Particularly, this paper investigates CHOKe queue properties in a transient regime, which is the time period of transition between two steady states of the queue when the rate of the unresponsive flow changes. 


    From modeling perspective, the study of the transient behavior of CHOKe is an arduous task for two main reasons: (a) {\em leaky} nature of the queue, meaning that packets already in queue may be dropped later, (b) continuous state transition of the queue in the transient regime. Due to (a), the delay of a packet is not merely the backlog the packet sees upon arrival divided by the link capacity as in non-leaky queues. Besides, the spatial packet distribution of a flow in the queue can be nonuniform throughout the queue. Due to (b), many parameters that characterize the queue (e.g., flow matching probability, backlog size, skewed packet distributions of flows in the queue) are likely to be dynamically changing. Both constraints prohibit us from making ``safe'' simplifying assumptions in analyzing the transient behavior of CHOKe. 


    In this paper, we take the first step in characterizing the transient UDP {\em transmission} rates at a CHOKe queue in the immediate aftermath of change in the UDP traffic {\em arrival} rate. In particular, we focus on: (1) how UDP utilization evolves in transient time, and more importantly (2) the limits, i.e., how far the utilization goes up or down in transient time. We notice that as the UDP traffic arrival rate goes up or down, its transient transmission rate can go in opposite direction in a dramatic fashion. For example, when the UDP rate sharply decreases, its utilization rapidly soars and often exceeds the steady state limits asserted above. Extreme transient behaviors are observed when a very high rate UDP flow abruptly stops. In such cases, UDP transient utilization can suddenly jump from 0\% to over, say, 70\%. This intriguing phenomenon cannot be explained with the literature results.

The contribution of this paper is several-fold. First, the above intriguing phenomenon is illustrated with examples, which forms a motivation of the work. Second, for any given UDP traffic rate, we  derive quantitatively the queue parameters that characterize the spatial properties of the queue in a steady state. Third, by leveraging the queue parameters above and abstracting the UDP rate change by a factor, both the evolution and the extreme points of UDP throughput in transient regime are analytically derived for any arbitrary UDP rate change. Last but not least, we obtain generic plots that succinctly represent both transient and steady state UDP throughput behaviors. Extensive simulations confirm the validity of the theoretical results. 


    \subsection{Structure of paper}
    The rest of this paper is organized as follows. Next section presents the system setup, basic assumptions and notation. Sec.~\ref{sec:motivationBackgnd} explains the motivation using examples, and presents some background on steady state behavior. Since the transient phase is a transition period between two steady states, discussion on steady state models is as relevant. Our theoretical models are presented in Sec.~\ref{sec:modifiedSDM}. In particular, Sec.~\ref{subsec:rateConservation} derives the insightful \emph{rate conservation argument} and obtains  further simplifying assumptions required for the transient analysis; Sec.~\ref{subsec:modifiedSDM} lays out the theoretical foundation on the \emph{spatial distribution model} just before rate change; Sec.~\ref{subsec:transient} tracks the UDP link utilizations and derives its properties during the transient period. Sec.~\ref{sec:evaluation} presents model validation and simulation results. Sec.~\ref{sec:conclusion} concludes the paper.

\section{System Model and Notation} \label{sec:modelAssumptionNotation}
    The studied system is shown in Fig.~\ref{fig:sysModel}, where $N$ rate-adaptive similar TCP flows share a link with a single unresponsive / aggressive UDP flow. Flows are indexed from $0,~\cdots,~N$, where $0$ denotes the UDP flow. Since TCP flows are similar, hereafter $1$ denotes a typical TCP flow. Packets are queued and scheduled in the FIFO manner on to the link. The steady-state backlog size in packets is denoted by $b$.  We assume large $N$ and $b$.
        \begin{figure}[h]
            \centering
            \includegraphics[width=0.4\textwidth]{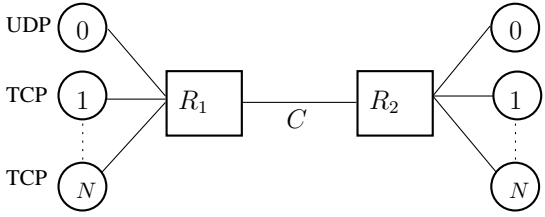}
            \caption{System model.}
            \label{fig:sysModel}
        \end{figure}

 The full notation is summarized in Table~\ref{tab:notation}. The key performance metric is the flow (link) utilization denoted by $\mu_i,~i\in[0,1]$. It represents the flow's share of bandwidth on the link. Given UDP link utilization $\mu_0$ and link capacity $C$,  the throughput of all flows  can be computed.

 A queue parameter that is not indexed with a control variable, say time $t$, designates the value of the parameter in the steady state. Otherwise, the parameter is dependent on that control variable. For example, $\mu_0$ designates UDP utilization in a steady state, while $\mu_0(t)$ is changing with $t$, more likely during a transient regime.

        \begin{table}[h!]\centering
         \caption{Notation}
         \label{tab:notation}
         \begin{small}
         \begin{tabular}{| l l |}
         \hline
           Symbol &     Description    \\ [3pt]
           \hline
            $N$      & number of TCP flows                                          \\
            $r$     & congestion(RED)-based dropping probability, or              \\
                & ambient drop probability (common to all flows)                    \\
            $x_i$   & source rate of flow $i$                                       \\
            $\mu_i$ & link utilization of flow $i$                                  \\
            $b_i$   & amount of flow $i$ packets in buffer                                    \\
            $b$     & total backlog  $b= \sum_{i=0}^{N} b_i $  in packets           \\
            $h_i$   & the ratio $b_i/b$ (matching probability)                      \\

           $y\in[0,b]$   &position in queue                                     \\
           $v(y)$       & packet velocity at $y$                                    \\
           $\tau(y)$    & queueing delay to reach at $y$                            \\
           $\rho_i(y)$     & prob. of finding flow $i\in[0,1]$ at position $y$ \\
           \hline
    \end{tabular}
    \end{small}
    \end{table}

    A consequence of large $N$ is that for a TCP flow, the packet matching in CHOKe is rare and its drop rate is mainly due to congestion (RED). That is,
    \begin{align}\label{eq:tcpNoMatch}
        h_1 &=\frac{b_1}{b}=\frac{b_1}{b_0+Nb_1}\leq \frac{1}{N} \approx 0
    \end{align}

\section{Motivation and Background}\label{sec:motivationBackgnd}
    \subsection{Motivating Examples}\label{subsec:motivation}
    To clarify our motivation, we provide example scenarios employing the network setup depicted in Fig.~\ref{fig:sysModel}. There are $N=100$ TCP sources and a UDP flow whose arrival rate is dynamically varying. The simulation parameters are described in full in Sec. \ref{sec:evaluation}.

    \emph{Example 1: }
The initial UDP arrival rate $x_0$ was $0.5C$ and $0.25C$, where $C$ is the link capacity, for Experiment 1 and Experiment 2, respectively. At $t=21$, the $x_0$ suddenly jumps by factors of 4 and 12 to $2C$ and $3C$, respectively, and then returns back to $0.5C$ and $0.25C$ at $t=22$. We conduct 500 replications of the two experiments and the resulting UDP flow utilizations (as measured using time intervals of $10$ms) are shown in Fig.~\ref {fig:4x12xUtil}.

      \begin{figure}[h]\centering
           \includegraphics[width=0.4\textwidth]{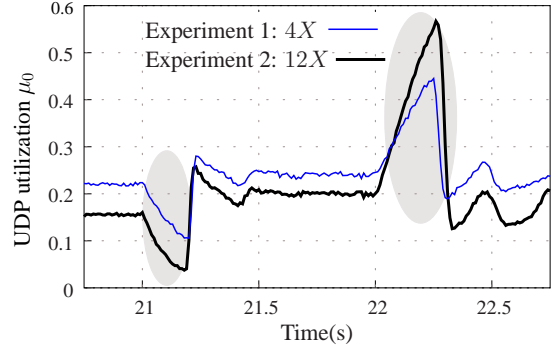}
        \caption{Transient UDP utilizations when rate $x_0$ changes by a factor of 4 and 12.}
        \label{fig:4x12xUtil}
      \end{figure}

   \emph{Example 2: }
The initial UDP arrival rate was $10C$. As shown in Fig.~\ref{fig:x0flap1C10C}, the input UDP rate flaps between $1C$ and $10C$ every 250ms. The figure shows UDP utilization averaged over 1000 replications, with measurements taken every 1ms. 

      \begin{figure}[h]\centering
        \includegraphics[width=0.4\textwidth]{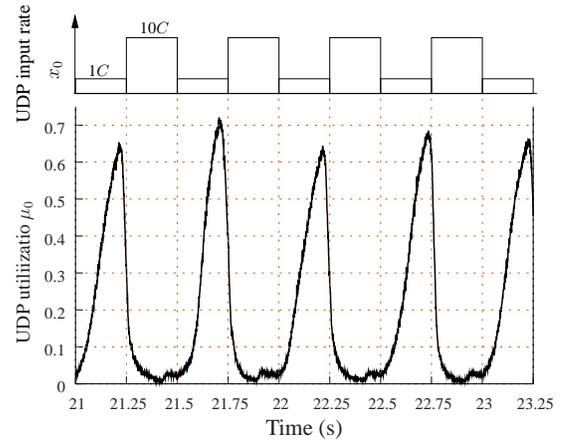}
        \caption{Transient UDP utilizations when $x_0$ flaps between $1C$ and $10C$ every 250ms.}
        \label{fig:x0flap1C10C} 
      \end{figure}


\subsection{Observation and Objective}\label{subsub:exObservation}
    In this paper, we are interested in the peculiar behaviors  (e.g., shaded in Fig.~\ref{fig:4x12xUtil}) when UDP arrival rate varies. The behaviors can represent transitions of a CHOKe queue from one steady-state to another and typically lasts a full queueing delay. We note during this transient phase that,
    \begin{itemize}
        \item The queue does not simply undergo a \emph{smooth} transition between the two steady states.  Surprisingly, the change in transient UDP utilization is not determined by the $\mu_0$ values of the two steady states. Rather, it follows the direction opposite to the arrival rate change. Specifically, if UDP arrival rate goes up, its utilization dips and vice-versa. For example, consider Experiment 2 shown in Fig.~\ref{fig:4x12xUtil}. Instead of steadily increasing  from a steady state value $\mu_{0,0.25C}=16\%$  to another steady state value $\mu_{0,3C}=21\%$, the transient UDP link utilization first whittles down to 3.75\% at $t=21.18$s. (See Fig.~\ref{fig:olm-steadystate} to find out steady state $\mu_0$ values.)
        \item While UDP buffer and utilization bounds stipulated in earlier works~\cite{ChokeToN04,ChokeSigmetrics03} hold in the steady-state, they are easily violated during the transient phase. For example, $\mu_0 \leq 26.9$ for $\forall x_0$ during steady-state. However, during the transient phase, $\mu_0$ may increase abruptly by several factors. In Fig.~\ref{fig:x0flap1C10C}, for instance, $\mu_0(21.7s)=72\%$.
    \end{itemize}
    What are the lowest and highest UDP utilizations during the transient phase? Note that these extreme UDP utilizations during the transient regime depend on the measurement interval / window. With large windows, the transient behavior gets diffused and evened out by the adjacent steady-state results. The objective of this work is to obtain the \emph{extreme packet-level UDP utilizations} (i.e., the highest or lowest utilizations as measured for each packet) by analysis.

    These observations have crucial and practical implication. Internet flows, dominated by short Web transfers, often see fluctuating available bandwidth and may adapt their sending rates.  For example, in periods of high UDP link utilizations, the number of Web document transfers may dwindle and their transfer completion times may get inflated.

    Since the transient regime represents a departure from a stable queue state, its behavior may be influenced by the earlier steady state, as we shall soon see. Therefore, it is fitting to briefly discuss the steady state behaviors.

    \subsection{Background on Steady State Models}\label{subsec:backgrnd}
    As mentioned earlier, steady state CHOKe models, e.g., ~\cite{ChokeToN04,ChokeSigmetrics03}, are important for predicting transient behaviors. We call the model in~\cite{ChokeSigmetrics03} the \emph{overall loss model} and the one in~\cite{ChokeToN04} the \emph{spatial distribution model}. Both models assume the CHOKe dropping and RED-based dropping are reversed for analytic simplicity, as shown in Fig.~\ref{fig:chokeQReveresed}. The only independent parameter in both models is the UDP flow arrival rate, $x_0$. Thereupon, change in $x_0$ causes the departure from current UDP utilization and kicks start the transient regime.

        \begin{figure}[tbh!]
            \centering
            \includegraphics[width=0.45\textwidth]{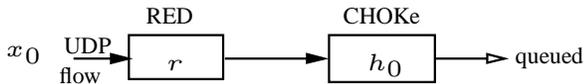}
            \caption{Schematic of dropping in CHOKe.}
            \label{fig:chokeQReveresed}
    \end{figure}

    \subsubsection{Overall Loss Model (OLM)}\label{subsub:olmModel}
    This model derives the steady-state UDP utilization $\mu_0$ and buffer share $h_0$ by first deriving the flow loss probability incurred by both CHOKe and RED parts of the CHOKe queue. The analysis results in a nonlinear numerical relation between $\mu_0$, $h_0$ and UDP input rate $x_0$  given by \eqref{eq:olm_u0}, \eqref{eq:olm_input}, and is graphically shown in Fig.~\ref{fig:olm-steadystate}. Further details can be found in \cite{ChokeSigmetrics03}.

        \begin{equation}
         \mu_0              = \frac{\ln[(1-h_0)/(1-2h_0) ]}{ [ (1-h_0)/(1-2h_0) ]+ \ln[(1-h_0)/(1-2h_0) ] }   \label{eq:olm_u0}
       \end{equation}

       \begin{equation}
         x_0(1-r)/C = \mu_0/(1-2h_0)     \label{eq:olm_input}
       \end{equation}

          \begin{figure}[h]
            \centering
            \includegraphics[width=0.4\textwidth]{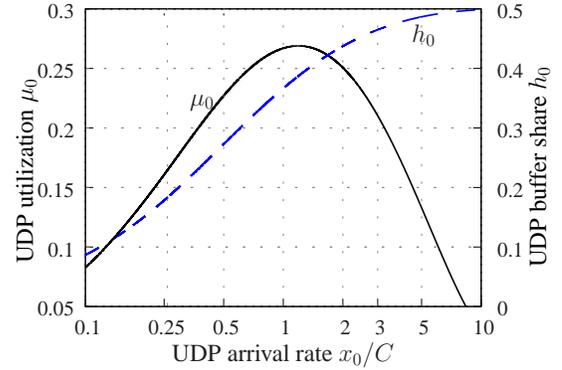}
            \caption{Steady-state UDP utilization $\mu_0$ and buffer share $h_0$.}
            \label{fig:olm-steadystate}
      \end{figure}

      The figure verifies the limit and asymptotic properties reviewed in Sec.~\ref{subsec:introChoke}. The  steady state simulation results in Fig.~\ref{fig:4x12xUtil} (areas outside the shade) also match $\mu_0$ values that can be found from Fig.~\ref{fig:olm-steadystate}.

    \subsubsection{Spatial Distribution Model (SDM)}
        To characterize the peculiar transient properties of a CHOKe queue, Tang, Wang and Low developed a novel ordinary differential equation model~\cite{ChokeToN04}. The model captures the spatial distribution of flows in a CHOKe queue using queue properties at the tail and head as boundary conditions. This spatial distribution includes the packet velocity and the probability of finding a flow packet at any position in the queue. An important concept introduced is \emph{thinning} which refers to the decaying of UDP packet velocity as the packet moves along towards the queue head. This steady state model, and the notion of thinning, are useful for studying the transient properties as well. We present the basic model, and provide pertinent assumptions that enable us to extend the model to the transient regime in Sec.~\ref{sec:modifiedSDM}.

    \section{Modeling the Transient Regime}\label{sec:modifiedSDM}
    Before delving into the transient model, we explain the transient behaviors using an argument, which we call the \emph{rate conservation law}. This argument also provides an insight to extending the SDM to transient regimes.
    \subsection{Rate Conservation Argument}\label{subsec:rateConservation}
    Before a UDP packet can be admitted into a CHOKe queue, it must survive both the RED and the CHOKe based dropping. The probability of packet admission into queue is then $(1-r)(1-h_0(t))$. Once in the queue, UDP packets can still be lost. This is because incoming packets that evade RED-based dropping (with probability $1-r$) may trigger flow matching (with probability $h_0(t)$) and cause dropping of the matched packets. In addition, UDP packets can also leave the queue due to transmission with rate $\mu_0(t)C$. Summarizing, we get a system invariant that captures the rate of change in UDP buffered packets as follows:
    \begin{align}
    \frac{db_0(t)}{dt}  &=x_0(t)(1-r)(1-h_0(t))-x_0(t)(1-r)h_0(t)-\mu_0(t)C \nonumber \\
                        & =x_0(t)(1-r)(1-2h_0(t))-\mu_0(t)C    \label{eq:invariance}
    \end{align}


    Let us call \eqref{eq:invariance} the \emph{ rate conservation law}. That is, the rate of change in UDP buffer occupancy is the difference between the flow queueing rate $x_0(t)(1-r)(1-h_0(t))$ and the outgoing rate. The outgoing rate in turn is the sum of the departure/transmission rate given by $\mu_0(t)C$ and the leaking rate given by $x_0(t)(1-r)h_0(t)$. Here the leaking rate denotes the rate with which a queued UDP packet matches the incoming packet and is consequently dropped.

   The corresponding equation for a TCP flow is, where note $h_1=0$ due to large number assumption (see \eqref{eq:tcpNoMatch}),
    \begin{equation}\label{eq:tcp_invariance}
        \frac{db_1(t)}{dt}  =  x_1(t)(1-r)- \mu_1(t)C,
    \end{equation}
    where, since the link is fully used,
    \begin{align}\label{eq:mu_1}
        \mu_1(t)    &=  \frac{1-\mu_0(t)}{N}.
    \end{align}

   Since, trivially, $b(t)=b_0(t)+Nb_1(t)$, we get
   \begin{equation}\label{eq:b(t)invariance}
        \frac{db(t)}{dt}= \frac{db_0(t)}{dt} + N\frac{db_1(t)}{dt}.
   \end{equation}

    We remark  that during a stable / steady state  $db_i(t)/dt\approx 0,~i\in(0,1)$ (see Fig.~\ref{fig:dbdt}). 

    Now, let us assume  an abrupt change in UDP arrival rate $x_0$ and note the following in the immediate aftermath:
%
   \textit{ TCP flows react slowly in response to the sudden change in UDP arrival rate. Notably, TCP flows react in an RTT timescale, but the transient behavior lasts for sub-RTT scales}.
    Hence, during the transient phase,
    \begin{equation}\label{eq:dbdtEqualsdb0dt}
        \frac{db_1(t)}{dt} \approx 0 \quad \Rightarrow \quad \frac{db(t)}{dt} \approx \frac{db_0(t)}{dt}
    \end{equation}

    \begin{figure}[bht!]
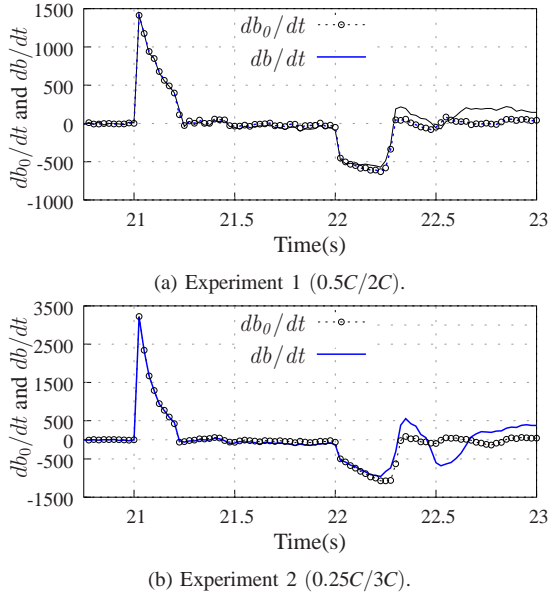
\centering
                \subfloat[Experiment 1 $(0.5C/2C)$. ] {\label{fig:dbdt4x}\includegraphics[width=0.4\textwidth]{./figures/dynamicInputCHOKe-4xDifferential}}  
                \hspace{5mm}
                \subfloat[Experiment 2 $(0.25C/3C)$.]{\label{fig:dbdt12x}\includegraphics[width=0.4\textwidth]{./figures/dynamicInputCHOKe-12xDifferential}}
            \caption{Transient regime: $db/dt \approx db_0/dt$ and $db_1/dt \approx 0$. Steady state: $db/dt\approx 0 $, $db_0/dt\approx 0 $, $db_1/dt\approx 0 $.}
            \label{fig:dbdt}
    \end{figure}

    Fig.~\ref{fig:dbdt} plots the simulation results for the two experiments of Sec.~\ref{subsec:motivation}. The figures confirm the argument above and Eq.~\eqref{eq:dbdtEqualsdb0dt}. As can be seen, for most parts of the simulations, the rate of change in total buffer occupancy $b$ is almost solely due to change in UDP buffer occupancy $b_0$.

    We warn the reader that \eqref{eq:dbdtEqualsdb0dt} may not hold outside the transient regime. For instance, when the UDP flow arrival rate plummets at $t=22$s, TCP flows respond by increasing their sending rates after around a round-trip delay (see especially Fig.~\ref{fig:dbdt12x}). Subsequently, $db_1/dt \neq 0$, but $db_0/dt\approx 0$. From \eqref{eq:b(t)invariance}, $db/dt\approx N db_1/dt$. That is, $db/dt$ swings from $db_0/dt$ during the transient phase to $Ndb_1/dt$ following the transient phase. However, since the TCP flows are largely in a congestion avoidance phase, the rise in $Ndb_1/dt$ is not as significant as that of $db_0/dt$ in the transient phase. After absorbing TCP bursts for a while (in a few round-trip cycles), the queue eventually settles to a new steady state determined by the new UDP arrival rate, and then $db_i/dt \approx 0,~i\in(0,1) $ once again.

    {\bf Remark:} Combining  \eqref{eq:tcp_invariance} and \eqref{eq:dbdtEqualsdb0dt}, we conclude that TCP packet arrival rate to the CHOKe queue matches its transmission rate during the transient regime.

    Now we are in a position to explain the transient behaviors shaded in Fig.~\ref{fig:4x12xUtil} and Fig.~\ref{fig:x0flap1C10C}  from the perspective of the rate conservation law. Rearranging \eqref{eq:invariance},  we get 
        \begin{equation}\label{eq:invariance3}
            \mu_0(t) = \frac{x_0(t)(1-r)}{C} ( 1-2h_0(t)) -\frac{1}{C}\frac{db_0(t)}{dt}
        \end{equation}

    Interestingly, \eqref{eq:invariance3} captures all pertinent behaviors of the system. In the steady state, $db_0(t)/dt\approx0$, and

        \begin{equation}\label{eq:invariance4}
            \mu_0 = \frac{x_0(1-r)}{C} ( 1-2h_0)
        \end{equation}
 which is \eqref{eq:olm_input}. \eqref{eq:invariance4} is a key equation in the OLM  model and can reproduce the steady state results depicted in Fig.~\ref{fig:olm-steadystate}.

For the transient behaviors, we explain only the dips in $\mu_0$ shown in Fig.~\ref{fig:4x12xUtil} but similar arguments follow for the peaks as well. An abrupt injection of UDP rate $x_0$ at $t=21$s rapidly ramps up the second term on the right hand side (r.h.s.) of \eqref{eq:invariance3} (see also Fig.~\ref{fig:dbdt}). Despite the rapid rise of $x_0(1-r)/C$, its contribution to $\mu_0(t)$ is counteracted by a corresponding rise of $h_0(t)=b_0(t)/b(t)$ (see Fig.~\ref{fig:12BufferChange} where $h_0(t)\rightarrow 45 \%$). Therefore, $\mu_0(t)$ is mainly  influenced inversely by the rate $db_0(t)/dt$ (cf. Fig.~\ref{fig:dbdt} and Fig. \ref{fig:4x12xUtil}). Specifically, focusing on Fig.~\ref{fig:4x12xUtil}  and using measurement intervals of 10ms, we observed the following : when $x_0$ increases 12-fold at $t=21$s, $db_0/dt \rightarrow  3500$ (as shown in Fig. \ref{fig:dbdt12x}) but the utilization $\mu_0 \rightarrow 3.75\%$. Conversely, when $x_0$ slackens by a factor of 12 at $t=22$s, the $db_0/dt$ falls to 1000 below $0$ (as shown in Fig. \ref{fig:dbdt12x}) but the UDP link bandwidth share soars to 56.5\%.

            \begin{figure}[tbh!]
            \centering
            \includegraphics[width=0.45\textwidth]{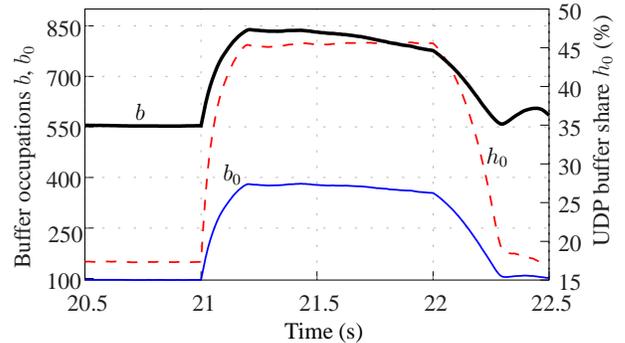}
            \caption{At $t=21$, UDP buffer share $h_0$ jumps radically, nullifying the impact on $\mu_0$ of the sudden change in  $x_0$.}
            \label{fig:12BufferChange}
    \end{figure}

While the rate conservation argument explains the transient queue dynamics well, due to several dynamically changing parameters ($h_0(t)$, $b_0(t)$, $b(t)$) in the transient regime, it is difficult to derive quantitative $\mu_0$ results directly from \eqref{eq:invariance3}. Nevertheless, some of the insights we gained become useful for extending the SDM model to the transient regime, as we shall soon see.

    \subsection{Modified Spatial Distribution Model}\label{subsec:modifiedSDM}
    This section is dedicated to the development of the SDM in concert with the transient regime. A schematic diagram of this model is illustrated in Fig. \ref{fig:udpVelocityDecay}.

    \begin{figure}[h!]
            \centering
            \includegraphics[width=0.5\textwidth]{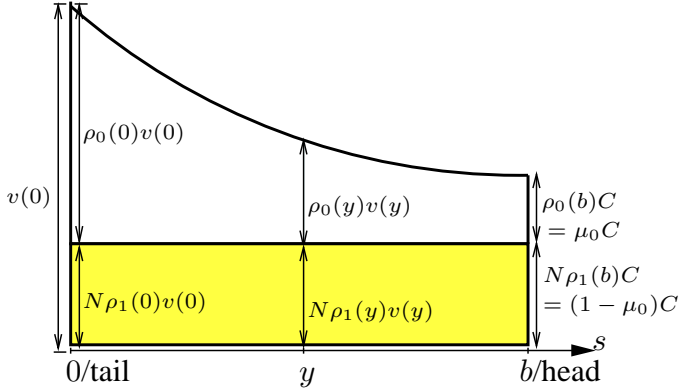}
            \caption{Schematic diagram: Decay of UDP velocity $\rho_0(y)v(y)$ in queue. Look at the similarity to Fig.~\ref{fig:vyVSy} with $x_0=10C$.}
            \label{fig:udpVelocityDecay}
    \end{figure}

    \subsubsection{Model parameters}
SDM can be described by a few key parameters (see Fig.~\ref{fig:udpVelocityDecay} and Table~\ref{tab:notation}). The parameters are the queue position/point/slot $y\in[0,b]$, the packet velocity $v(y)$ at $y$, the probability $\rho_i(y)$ of finding a flow $i$ packet at $y$, and the queueing delay $\tau(y)$ for the packet at the tail to arrive at slot $y$. Queue position $y$ is indexed from tail to head as $\{0,\cdots,~b\}$. The packet velocity $v(y)$ is the speed with which packets move towards the head of the queue, and is defined as:
    \begin{equation}\label{eq:velocity}
        v(y)=dy/dt.
    \end{equation}

    The packet velocity at queue tail $v(0)$ is simply the full queueing rate $\sum_{i=0}^N x_i(1-r)(1-h_i)$ (see Fig.~\ref{fig:chokeQReveresed}). At queue head, however, $v(y)$ is merely the link capacity, i.e., $v(b)=C$. The packet velocity $v(y)$ is related to the queueing delay $\tau(y)$ accumulated in going from tail $y=0$ to slot $y$ as follows,
        \begin{equation}\label{eq:qDelay}
        dt=dy/v(y) \qquad \Rightarrow \qquad \tau(y)= \int\limits_0^y \frac{1}{v(s)}ds
    \end{equation}
where the equation on the left side is obtained from \eqref{eq:velocity}.

    Of course, the full queueing delay is $\tau(b)$.  Alternatively, it can be derived using queueing principles.  The rate of departure of TCP packets is $C(1-\mu_0)$, and average number of TCP packets in queue is given by $b(1-h_0)$. As far as the TCP flows are concerned, the model is a non-leaky queue (since $h_1=0$). Therefore, we apply Little's law to obtain,
    \begin{equation}\label{eq:fullDelay}
        \tau(b)= \frac{b(1-h_0)}{C(1-\mu_0)}.
    \end{equation}

    Another useful spatial parameter is $\rho_i(y)$---the probability of finding a flow $i$ packet at slot $y$. Trivially,
    \begin{align}
        \rho_0(y)+N\rho_1(y)& =1 \qquad y\in[0,b] \label{eq:initialConditions1}
    \end{align}

    Here, $\rho_i(y)$ is closely related to the packet velocity $v(y)$: It quantifies the fraction of flow $i$'s packet velocity at $y$ to the total packet velocity $v(y)$. For instance, at queue head $y=b$, $\rho_i(b)=\mu_i$, i.e., the probability is simply the flow utilization.

    Summarizing the two important parameters $\rho_i(y)$ and $v(y)$ at queue tail and head, the following boundary conditions apply, which are also illustrated in Fig.~\ref{fig:udpVelocityDecay}. Here, we ignore TCP flow matching as discussed earlier for \eqref{eq:tcpNoMatch}, i.e., $h_1 \approx 0$.

        \begin{align}
        v(0)& = x_0(1-r)(1-h_0)+Nx_1(1-r)(1-h_1) \nonumber \\
            &\approx  x_0(1-r)(1-h_0)+Nx_1(1-r) \label{eq:icv(y)1} \\
        \rho_0(0) &=  \frac{x_0(1-h_0)(1-r)}{[x_0(1-h_0)+Nx_1](1-r)} \approx \frac{x_0(1-h_0)}{x_0(1-h_0)+Nx_1} \label{eq:rho_0(0)} \\
        \rho_1(0) &= \frac{x_1}{x_0(1-h_0)+Nx_1}
    \end{align}
    \begin{equation}\label{eq:icAtb}
        v(b)=C, \qquad \rho_0(b) =\mu_0, \qquad \rho_1(b)= \mu_1=\frac{1-\mu_0}{N}
    \end{equation}

       Note that $\rho_1(b)$ in \eqref{eq:icAtb} is the same as \eqref{eq:mu_1}.  As noted in Sec.~\ref{subsec:rateConservation}, TCP transmission rates do not change during the transient regime, resulting in constant TCP packet velocities throughout the queue. In a congested CHOKe queue with high UDP rate $x_0$, however, the total packet velocity $v(y)$ is continuously decreasing because UDP arrivals trigger packet drops through flow matching. In fluid terms, we say the UDP fluid gets \emph{thinned} as it moves along the queue. We formalize the notion of thinning and use it to derive the slot parameters next.

       \subsubsection{Ordinary differential equation model}\label{subsec:ode}
      The UDP portion of the packet velocity at the tail ($y=0$) is given by $\rho_0(0)v(0)=x_0(1-h_0)(1-r)$ and the amount of UDP fluid in small time $dt$ at the tail by $\rho_0(0)v(0)dt$. The corresponding values at $y$ are $\rho_0(y)v(y)$ and $\rho_0(y)v(y)dt$, respectively. Traveling from queue tail to $y$ takes $\tau(y)$ during which time $x_0(1-r)\tau(y)$ new packets would arrive to the queue. Each arrival triggers a flow matching trial and drops the small volume of fluid with success probability $1/b$. The probability that the UDP volume escapes matchings by all arrivals  is $(1-1/b)^{x_0(1-r)\tau(y)}$. The UDP packet velocity at $y$ is therefore thinned or weakened as,
    \begin{align}
        \rho_0(y)v(y)&=\rho_0(0)v(0)(1-1/b)^{x_0(1-r)\tau(y)} \label{eq:chokeSteadyThinning}
    \end{align}
    On the other hand, there is no thinning for TCP and hence a constant  TCP packet velocity throughout. 
    \begin{align}
            N\rho_1(y)v(y)&= N\rho_1(0)v(0)=(1-\mu_0)C     \label{eq:tcpThroughput}
    \end{align}

    During the transient regime, \eqref{eq:tcpThroughput} is still valid due to the slow reaction of TCP congestion control, as discussed in Sec.~\ref{subsec:rateConservation}.

    Rearranging \eqref{eq:tcpThroughput} and using \eqref{eq:initialConditions1}, we get
    \begin{align}
            v(y)&= \frac{\rho_1(0)v(0)}{\rho_1(y)}=\frac{(1-\mu_0)C}{N \rho_1(y)} = \frac{(1-\mu_0)C}{1-\rho_0(y)}    \label{eq:v(y)}
    \end{align}


Define parameters $a$ and $\beta$ as follows, which will often be used throughout the rest of the paper:
    \begin{align}
      a: &=\frac{1-\rho_0(0)}{\rho_0(0)} \nonumber \\
      \beta: &=\ln(1-1/b) \nonumber
    \end{align}

    Using \eqref{eq:initialConditions1}, \eqref{eq:icv(y)1}, \eqref{eq:rho_0(0)} and \eqref{eq:tcpThroughput}, $a$ can equivalently be rewritten as,
    \begin{equation}\nonumber
        a=\frac{1-\rho_0(0)}{\rho_0(0)} = \frac{N\rho_1(0)}{\rho_0(0)}=\frac{N\rho_1(0)}{\rho_0(0)}\frac{v(0)}{v(0)}=\frac{(1-\mu_0)C}{x_0(1-r)(1-h_0)}.
    \end{equation}

    Taking logarithm of key equation~\eqref{eq:chokeSteadyThinning} first and then differentiation w.r.t. $y$, we get
    \begin{align}
        \ln(\rho_0(y)v(y))&=\ln(\rho_0(0)v(0))+ x_0(1-r)\beta \tau(y) \nonumber \\
        \frac{\rho_0^{'}(y)}{\rho_0(y)}+ \frac{v^{'}(y)}{v(y)} &= 0+ \frac{x_0(1-r) \beta}{v(y)} \label{eq:lnThinningv1}
    \end{align}
    where $\tau{'}(y)^=1/v(y)$ from \eqref{eq:qDelay}.


    Applying \eqref{eq:v(y)} for $v(y)$ and $v^{'}(y)$ and inserting into l.h.s. of \eqref{eq:lnThinningv1}, the following ordinary differential equation (ODE) is obtained.
    \begin{align}
        \frac{\rho_0^{'}(y)}{\rho_0(y)}+ \frac{\rho_0^{'}(y)}{1-\rho_0(y)} = \frac{x_0(1-r) \beta}{v(y)} \label{eq:lnThinningv2}
    \end{align}

Eq. \eqref{eq:lnThinningv2} establishes a foundation for the analysis in the remaining part of the paper.

First, solving $\rho_0(y)$ from ODE \eqref{eq:lnThinningv2} (see Appendix for the proof), we have the following relation between $\rho_0(y)$ and $\tau(y)$:

\begin{lemma}\label{lemma-rho0(y)}
    \begin{equation}
        \rho_0(y)=\frac{e^{x_0(1-r)\beta\tau(y)}}{a+e^{x_0(1-r)\beta\tau(y)}}. \label{eq:rho_0y}
    \end{equation}
\end{lemma}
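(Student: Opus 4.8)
The plan is to recognize that the ODE~\eqref{eq:lnThinningv2} is exact and can be integrated in closed form after a single observation about the structure of its left-hand side. First I would rewrite the right-hand side using~\eqref{eq:qDelay}, which gives $1/v(y)=\tau'(y)$, so that~\eqref{eq:lnThinningv2} reads $\frac{\rho_0'(y)}{\rho_0(y)}+\frac{\rho_0'(y)}{1-\rho_0(y)}=x_0(1-r)\beta\,\tau'(y)$.

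The key step is to notice that the left-hand side is precisely $\frac{d}{dy}\ln\!\big(\rho_0(y)/(1-\rho_0(y))\big)$, since $\frac{d}{dy}\ln\rho_0(y)=\rho_0'(y)/\rho_0(y)$ and $\frac{d}{dy}\ln(1-\rho_0(y))=-\rho_0'(y)/(1-\rho_0(y))$. Hence the ODE collapses to $\frac{d}{dy}\ln\frac{\rho_0(y)}{1-\rho_0(y)}=x_0(1-r)\beta\,\tau'(y)$, which I would integrate from $0$ to $y$. Using $\tau(0)=0$ (immediate from~\eqref{eq:qDelay}), this yields $\ln\frac{\rho_0(y)}{1-\rho_0(y)}=\ln\frac{\rho_0(0)}{1-\rho_0(0)}+x_0(1-r)\beta\,\tau(y)$.

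Finally I would substitute the definition $a=(1-\rho_0(0))/\rho_0(0)$, so that $\rho_0(0)/(1-\rho_0(0))=1/a$, exponentiate to obtain $\frac{\rho_0(y)}{1-\rho_0(y)}=\frac{1}{a}e^{x_0(1-r)\beta\tau(y)}$, and solve this linear relation for $\rho_0(y)$, which gives exactly~\eqref{eq:rho_0y}. (Writing $u=e^{x_0(1-r)\beta\tau(y)}$, one gets $\rho_0(y)=(u/a)/(1+u/a)=u/(a+u)$.)

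There is essentially no analytic obstacle here once one spots that the awkward-looking sum $\rho_0'/\rho_0+\rho_0'/(1-\rho_0)$ is the logarithmic derivative of the odds ratio $\rho_0/(1-\rho_0)$; the remaining work is just bookkeeping with the boundary value at $y=0$ and the constant $a$. The only things I would double-check are the sign of $\beta=\ln(1-1/b)<0$ (which is consistent with UDP thinning, i.e. $\rho_0(y)$ decreasing in $y$ as in Fig.~\ref{fig:udpVelocityDecay}) and consistency of the result at $y=b$: substituting the Little's-law delay~\eqref{eq:fullDelay} and the boundary condition $\rho_0(b)=\mu_0$ from~\eqref{eq:icAtb} should reproduce the OLM relations, which serves as a sanity check on the derivation rather than as part of the proof itself.
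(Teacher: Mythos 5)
Your proposal is correct and follows essentially the same route as the paper's own proof: the paper likewise integrates~\eqref{eq:lnThinningv2} from $0$ to $y$, recognizing $\int_0^y \rho_0'/\rho_0 + \rho_0'/(1-\rho_0)\,ds$ as $\ln\bigl[\rho_0(s)/(1-\rho_0(s))\bigr]\big|_0^y$ and $\int_0^y ds/v(s)$ as $\tau(y)$, then absorbs the boundary term into $a$ and solves for $\rho_0(y)$. No gaps.
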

\vspace{2mm}

    In addition, substituting \eqref{eq:v(y)} for $v(y)$ in \eqref{eq:lnThinningv2} gives,
    \begin{equation}
        \frac{\rho_0^{'}(y)}{\rho_0(y)}+ \frac{\rho_0^{'}(y)}{1-\rho_0(y)} = \frac{x_0(1-r) \beta}{(1-\mu_0)C}(1-\rho_0(y)). \nonumber
       \end{equation}
    Dividing by $(1-\rho_0(y))$, we obtain
       \begin{equation}
         \frac{\rho_0^{'}(y)}{\rho_0(y)(1-\rho_0(y))}+ \frac{\rho_0^{'}(y)}{(1-\rho_0(y))^2} = \frac{x_0(1-r) \beta}{(1-\mu_0)C}   \nonumber
       \end{equation}
     Upon integrating w.r.t. $y$, we get
       \begin{align}
         \int\limits_0^y \frac{x_0(1-r) \beta}{(1-\mu_0)C}ds &= \int\limits_0^y \frac{\rho_0^{'}(s)}{\rho_0(s)(1-\rho_0(s))}ds+ \int\limits_0^y \frac{\rho_0^{'}(s)}{(1-\rho_0(s))^2}ds \nonumber \\
         \frac{x_0(1-r) \beta}{(1-\mu_0)C}s\bigg|_0^y    &=\ln\bigg[\frac{\rho_0(s)}{1-\rho_0(s)}\bigg] \bigg|_0^y +\frac{1}{1-\rho_0(s)}\bigg|_0^y \nonumber
        \end{align}
    Let $K=\frac{x_0 (1-r)\beta}{(1-\mu_0)C}$. We then get:
        \begin{align}
            Ky&= \ln\bigg[\frac{\rho_0(y)}{1-\rho_0(y)}\frac{1-\rho_0(0)}{\rho_0(0)}\bigg]+\frac{1}{1-\rho_0(y)}-\frac{1}{1-\rho_0(0)} 
        \end{align}
from which, the following relation between $y$ and $\rho_o(y)$ is easily established:

\begin{lemma}\label{lemma-y}
    \begin{equation}
y = \frac{1}{K}\ln\bigg[\frac{a\rho_0(y)}{1-\rho_0(y)}\bigg]+\frac{1}{K}\frac{\rho_0(y)-\rho_0(0)}{(1-\rho_0(y))(1-\rho_0(0))}\label{eq:qPositionYv1}
    \end{equation}
where $K=\frac{x_0 (1-r)\beta}{(1-\mu_0)C}$ and as above, $a:=\frac{1-\rho_0(0)}{\rho_0(0)}=\frac{(1-\mu_0)C}{x_0(1-r)(1-h_0)}$.
\end{lemma}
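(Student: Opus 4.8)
The plan is to obtain \eqref{eq:qPositionYv1} directly from the ODE \eqref{eq:lnThinningv2} by collapsing it into an equation in the single unknown $\rho_0(y)$ and then integrating once more from the tail $y=0$ to an arbitrary slot $y$. The key observation is that the velocity $v(y)$ on the right-hand side of \eqref{eq:lnThinningv2} is itself a function of $\rho_0(y)$ via \eqref{eq:v(y)}, namely $v(y)=(1-\mu_0)C/(1-\rho_0(y))$; substituting this removes $v(y)$ entirely and leaves a relation between $\rho_0(y)$, $\rho_0'(y)$ and the constant $K:=x_0(1-r)\beta/[(1-\mu_0)C]$.

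Concretely, I would first substitute \eqref{eq:v(y)} into \eqref{eq:lnThinningv2} to get
\[
\frac{\rho_0'(y)}{\rho_0(y)}+\frac{\rho_0'(y)}{1-\rho_0(y)}=K\,(1-\rho_0(y)),
\]
then divide through by $(1-\rho_0(y))$ so that the left-hand side becomes a sum of exact derivatives: using the partial fraction $1/[\rho_0(1-\rho_0)]=1/\rho_0+1/(1-\rho_0)$, the term $\rho_0'/[\rho_0(1-\rho_0)]$ integrates to $\ln[\rho_0/(1-\rho_0)]$, while $\rho_0'/(1-\rho_0)^2$ integrates to $1/(1-\rho_0)$. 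Integrating both sides over $[0,y]$ and inserting the tail value $\rho_0(0)$ as the lower limit yields
\[
Ky=\ln\!\Big[\frac{\rho_0(y)}{1-\rho_0(y)}\cdot\frac{1-\rho_0(0)}{\rho_0(0)}\Big]+\frac{1}{1-\rho_0(y)}-\frac{1}{1-\rho_0(0)}.
\]

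The last step is pure bookkeeping: recognize $a:=(1-\rho_0(0))/\rho_0(0)$ (which by \eqref{eq:rho_0(0)} and \eqref{eq:tcpThroughput} also equals $(1-\mu_0)C/[x_0(1-r)(1-h_0)]$), combine the two reciprocals into the single fraction $[\rho_0(y)-\rho_0(0)]/[(1-\rho_0(y))(1-\rho_0(0))]$, and divide by $K$ to arrive at \eqref{eq:qPositionYv1}. I do not expect a real obstacle here, since this is a one-variable separation followed by elementary integration; the only points requiring care are (i) fixing the constant of integration by the tail condition at $y=0$ rather than leaving it free, and (ii) noting that $\rho_0(y)<1$ throughout the queue — immediate from \eqref{eq:initialConditions1} since $N\rho_1(y)>0$ — so that the division by $(1-\rho_0(y))$ and the partial-fraction antiderivatives are legitimate. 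As a consistency check, \eqref{eq:qPositionYv1} and Lemma \ref{lemma-rho0(y)} should agree, since both are integrations of the same thinning relation \eqref{eq:chokeSteadyThinning}, one expressing $\rho_0(y)$ through $\tau(y)$ and the other through $y$.
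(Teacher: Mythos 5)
Your proposal is correct and follows essentially the same route as the paper: substitute $v(y)=(1-\mu_0)C/(1-\rho_0(y))$ from \eqref{eq:v(y)} into \eqref{eq:lnThinningv2}, divide by $(1-\rho_0(y))$, integrate the exact derivatives $\ln[\rho_0/(1-\rho_0)]$ and $1/(1-\rho_0)$ over $[0,y]$, and combine the boundary terms using $a=(1-\rho_0(0))/\rho_0(0)$. Your added remarks on fixing the integration constant at the tail and on $\rho_0(y)<1$ are sound but do not change the argument.
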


{\bf Remark:} Lemma \ref{lemma-rho0(y)} and Lemma \ref{lemma-y} are crucial, through which, the slot and its associated parameters are inter-related as defined by $\rho_0(y)$ \eqref{eq:rho_0y},  $y$ \eqref{eq:qPositionYv1}, $v(y)$ \eqref{eq:v(y)} and $\tau(y)$ \eqref{eq:lnThinningv3}. However, from the model in~\cite{ChokeToN04}, it is difficult to obtain for each slot $y$ corresponding explicit expressions of the spatial parameters $\rho_0(y)$, $v(y)$ and $\tau(y)$, hence limiting the application of results in \cite{ChokeToN04} to the transient analysis. This is because $\rho_0(y)$ and $v(y)$ were given in terms of queueing delay $\tau(y)$, but no explicit relations were afforded between queue slot $y$ and anyone of the triplets $(v(y), \rho_0(y), \tau(y))$ in \cite{ChokeToN04}.

{\bf Remark:} More importantly, the analytical principles leading to Lemma \ref{lemma-rho0(y)} and Lemma \ref{lemma-y} also apply to the transient regime, making the later transient analysis possible.

        %


    \begin{figure*}
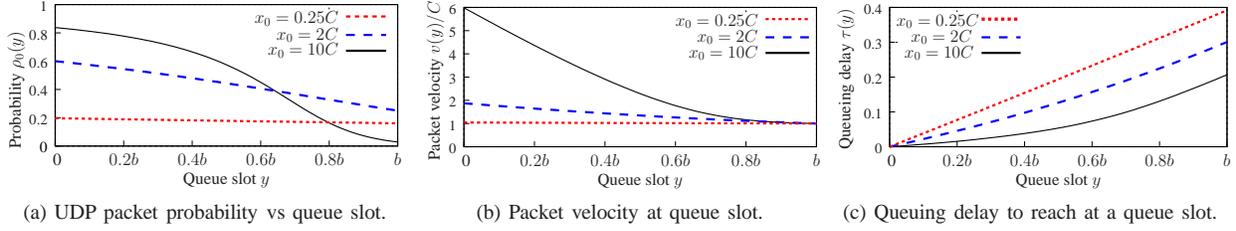

        \begin{minipage}[htb!]{0.98\linewidth}\centering
            \subfloat[UDP packet probability vs queue slot.] {
            \label{fig:rhoyVSy}\includegraphics[width=0.3\textwidth]{./figures/ss_pktProbAll} }
            \subfloat[Packet velocity at queue slot.] {
            \label{fig:vyVSy}\includegraphics[width=0.3\textwidth]{./figures/ss_pVelocityAll}}
            \subfloat[Queuing delay to reach at a queue slot.] {
            \label{fig:tauyVSy}\includegraphics[width=0.3\textwidth]{./figures/ss_qDelayAll}}
            \caption{Spatial characteristics of a CHOKe queue under different intensities of input UDP arrival rates.}
            \label{fig:ss_CHOKEproperties}
        \end{minipage}
    \end{figure*}

    \subsubsection{Properties of queue dynamics}\label{subsub:sdmProperties}
    This section briefly outlines properties of queue dynamics $\rho_0(y)$, $v(y)$ and $\tau(y)$. The detailed proofs are in the Appendix.

We start with $\rho_0(y)$. To prove the properties of $\rho_0(y)$ in Lemma \ref{eq:lemmaRho_0Y}, we need the following intermediate result.
\begin{lemma}\label{lemma:intermediateR}
    \begin{equation}\nonumber
        1 \leq \frac{1-\mu_0}{1-h_0} \leq 2
    \end{equation}
\end{lemma}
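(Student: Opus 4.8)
The plan is to use the two steady-state relations that the excerpt has already put on the table — the OLM input equation \eqref{eq:olm_input} (equivalently \eqref{eq:invariance4}), $\mu_0 = \frac{x_0(1-r)}{C}(1-2h_0)$, and the alternative expression for the parameter $a$, namely $a = \frac{1-\rho_0(0)}{\rho_0(0)} = \frac{(1-\mu_0)C}{x_0(1-r)(1-h_0)}$. Dividing these two relations eliminates the awkward factor $x_0(1-r)/C$ and yields a clean identity linking $\mu_0$, $h_0$ and $a$. Concretely, from \eqref{eq:invariance4} we have $\frac{x_0(1-r)}{C} = \frac{\mu_0}{1-2h_0}$, and substituting into the formula for $a$ gives
\[
a = \frac{(1-\mu_0)(1-2h_0)}{\mu_0(1-h_0)}, \qquad\text{hence}\qquad \frac{1-\mu_0}{1-h_0} = \frac{a\,\mu_0}{1-2h_0}.
\]
So the ratio we must bound is expressed purely in terms of $a$, $\mu_0$ and $h_0$.

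Next I would get rid of $\mu_0$ using the other OLM relation \eqref{eq:olm_u0}. Note that $\frac{1-h_0}{1-2h_0} = 1 + \frac{h_0}{1-2h_0}$, and with $a = \frac{1-\rho_0(0)}{\rho_0(0)}$ and $\rho_0(0) = \frac{x_0(1-h_0)}{x_0(1-h_0)+Nx_1}$ from \eqref{eq:rho_0(0)}, one can also write the quantity $L := \frac{1-h_0}{1-2h_0}$ in terms of $a$; indeed \eqref{eq:olm_u0} says precisely $\mu_0 = \frac{\ln L}{L + \ln L}$. Combining this with the displayed identity for $\frac{1-\mu_0}{1-h_0}$, the target inequality $1 \le \frac{1-\mu_0}{1-h_0}\le 2$ should reduce to an inequality involving the single scalar $L \ge 1$ (since $h_0\in[0,1/2)$ forces $L\ge 1$), of the shape $1 \le f(L) \le 2$ where $f$ is an explicit elementary function of $L$ built from $L$, $\ln L$, and a linear term. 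I would then verify the two one-sided bounds by monotonicity/convexity: check the endpoint $L=1$ (which corresponds to $h_0=0$, $x_0\to 0$, where one expects $\frac{1-\mu_0}{1-h_0}\to 1$), check the limit $L\to\infty$ (the asymptotic regime $h_0\to 1/2$, $\mu_0\to 0$, where the ratio $\to 2$), and show $f$ is monotone in between by differentiating once and signing the derivative.

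Alternatively — and this may be the cleaner route — I would argue directly and avoid \eqref{eq:olm_u0} altogether. The lower bound $\frac{1-\mu_0}{1-h_0}\ge 1$ is equivalent to $h_0 \ge \mu_0$, i.e. the UDP buffer share is at least its bandwidth share; the upper bound $\frac{1-\mu_0}{1-h_0}\le 2$ is equivalent to $1 - 2h_0 + \mu_0 \ge 0$, i.e. $\mu_0 \ge 2h_0 - 1$, which is automatic when $h_0 \le 1/2$ since then $2h_0 - 1 \le 0 \le \mu_0$ — so the upper bound is essentially free given $h_0 < 1/2$. Thus the whole content is the lower bound $h_0 \ge \mu_0$. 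This I would extract from \eqref{eq:invariance4}: it gives $\mu_0 = \frac{x_0(1-r)}{C}(1-2h_0)$ while the raw queueing rate of UDP fluid that is admitted is $\frac{x_0(1-r)}{C}(1-h_0)$, and since $h_0 < 1/2$ we have $1 - 2h_0 < 1 - h_0$; the gap between admitted rate and transmitted rate $\mu_0$ is exactly the leak term $\frac{x_0(1-r)}{C}h_0$, and in steady state the time-average buffer content must make the leaked-plus-transmitted balance force $h_0 \ge \mu_0$. Making ``the buffer must be at least as full, proportionally, as the flow is fast'' precise is the one genuinely non-trivial step; I expect to nail it by writing $h_0 = b_0/b$ and $\mu_0 = \rho_0(b)$ and invoking the thinning relation \eqref{eq:chokeSteadyThinning}, which shows $\rho_0(y)$ is nonincreasing in $y$ (UDP velocity only decays), so its head value $\mu_0 = \rho_0(b)$ is at most its buffer-averaged value, and one checks that the buffer average of $\rho_0(y)$ is exactly $h_0$.

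The main obstacle, then, is the lower bound $h_0 \ge \mu_0$: everything hinges on correctly identifying $h_0$ as a spatial average of $\rho_0(y)$ over $y \in [0,b]$ and then using monotonicity of $\rho_0$ (from thinning) to compare that average with the head value $\rho_0(b) = \mu_0$. If the spatial-average identification needs the full machinery of Lemma \ref{lemma-rho0(y)} and Lemma \ref{lemma-y} — integrating \eqref{eq:rho_0y} against $dy$ and recognizing $b_0 = \int_0^b \rho_0(y)\,dy$ — that integral is the place where the calculation could get heavy, and I would fall back on the algebraic route through \eqref{eq:olm_u0} and \eqref{eq:invariance4} above, where the inequality becomes a one-variable calculus exercise in $L = (1-h_0)/(1-2h_0)$.
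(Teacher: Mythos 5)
Your proposal is correct, and your primary (algebraic) route is in substance the paper's own proof: the paper establishes the lower bound by contradiction, assuming $h_0<\mu_0$ and reducing, via \eqref{eq:olm_u0}, to $\frac{h_0}{1-2h_0} < \ln\frac{1-h_0}{1-2h_0}$, which contradicts $\ln x\le x-1$; and it dispatches the upper bound exactly as you do, noting that $h_0<1/2$ would otherwise force $\mu_0<2h_0-1<0$. Your reduction to the single variable $L=(1-h_0)/(1-2h_0)$ in fact closes more cleanly than you anticipate: since $1-\mu_0=\frac{L}{L+\ln L}$ and $1-h_0=\frac{L}{2L-1}$, one gets $\frac{1-\mu_0}{1-h_0}=\frac{2L-1}{L+\ln L}$, and both bounds drop out of $0\le\ln L\le L-1$ with no monotonicity or endpoint analysis needed — so this is just a direct (rather than contradiction-based) rewrite of the paper's argument. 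Your second, ``spatial averaging'' route is genuinely different and conceptually appealing: $\mu_0=\rho_0(b)$ is the head value of a decreasing profile whose buffer average is $h_0$, whence $\mu_0\le h_0$. Two cautions there: (i) the identification $b_0=\int_0^b\rho_0(y)\,dy$ is natural but nowhere formally established in the paper, so you would have to supply it; (ii) because the paper uses this lemma inside the proof of Lemma~\ref{eq:lemmaRho_0Y}, you must invoke only the monotonicity of $\rho_0(y)$ — which follows from $\beta<0$ in \eqref{eq:diffRhoyv1} independently of the present lemma — and not its convexity classification, to avoid circularity. What the structural route buys is an interpretation (UDP's buffer share always dominates its bandwidth share) that the algebraic manipulation of \eqref{eq:olm_u0} obscures; what the paper's route buys is that it needs nothing beyond the OLM relations and one elementary logarithm inequality.
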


The following properties hold for $\rho_0(y)$, and follow from \eqref{eq:rho_0y} and Lemma \ref{lemma:intermediateR}.
    \begin{lemma}\label{eq:lemmaRho_0Y}
       Given $x_0>0$, 
        \begin{enumerate}[(a)]
             \item If $x_0 \leqslant C/2$, $\rho_0(y)$ is strictly convex decreasing.
             \item Otherwise, $\rho_0(y)$ is concave decreasing from $\rho_0(0)$ to $\rho_0^*$, and  convex decreasing from $\rho_0^*$ to $\rho_0(b)$ where the critical point $(y^*,~\rho_0^*)$ is given by
                    \begin{align}
                        \rho_0^*    &=\rho_0(y^*)= \frac{1}{3} \nonumber \\
                        y^*         &= \frac{1}{K} \ln\bigg(\frac{a}{2}\bigg)+\frac{1}{K} \frac{1-3\rho_0(0)}{2(1-\rho_0(0))}. \nonumber
                \end{align}
        \end{enumerate}
    \end{lemma}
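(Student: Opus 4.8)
The plan is to analyze the convexity of $\rho_0(y)$ by studying the sign of $\rho_0''(y)$, using the explicit formula \eqref{eq:rho_0y} together with the chain rule and the ODE \eqref{eq:lnThinningv2}. First I would observe that $\rho_0(y)$ is decreasing: from \eqref{eq:rho_0y}, writing $u=e^{x_0(1-r)\beta\tau(y)}$ and noting $\beta=\ln(1-1/b)<0$, $\tau(y)$ increasing, and $x_0,(1-r)>0$, we see $u$ is a decreasing function of $y$ with values in $(0,1]$, and $\rho_0=u/(a+u)$ is increasing in $u$; hence $\rho_0(y)$ decreases from $\rho_0(0)$ at $y=0$. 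So $\rho_0'(y)<0$ throughout, and the real content is the second derivative.

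Next, rather than differentiating \eqref{eq:rho_0y} twice directly (which is messy because $\tau(y)$ appears implicitly), I would start from the ODE in the convenient form obtained just after Lemma~\ref{lemma-rho0(y)}, namely
\begin{equation}\nonumber
\frac{\rho_0'(y)}{\rho_0(y)(1-\rho_0(y))}+\frac{\rho_0'(y)}{(1-\rho_0(y))^2}=\frac{x_0(1-r)\beta}{(1-\mu_0)C}=K,
\end{equation}
i.e. $\rho_0'(y)\,g(\rho_0(y))=K$ where $g(\rho)=\dfrac{1}{\rho(1-\rho)}+\dfrac{1}{(1-\rho)^2}=\dfrac{1}{\rho(1-\rho)^2}$. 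Differentiating this identity with respect to $y$ gives $\rho_0''(y)\,g(\rho_0)+\rho_0'(y)^2\,g'(\rho_0)=0$, so
\begin{equation}\nonumber
\rho_0''(y)=-\rho_0'(y)^2\,\frac{g'(\rho_0(y))}{g(\rho_0(y))}.
\end{equation}
Since $\rho_0'(y)^2>0$ and $g(\rho)>0$ on $(0,1)$, the sign of $\rho_0''(y)$ is exactly the sign of $-g'(\rho_0(y))$. Computing $g'(\rho)=\frac{d}{d\rho}\big[\rho^{-1}(1-\rho)^{-2}\big]=-\rho^{-2}(1-\rho)^{-2}+2\rho^{-1}(1-\rho)^{-3}=\rho^{-2}(1-\rho)^{-3}\big(3\rho-1\big)$, we find $g'(\rho)$ has the sign of $3\rho-1$. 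Therefore $\rho_0''(y)>0$ (convex) iff $\rho_0(y)<1/3$, and $\rho_0''(y)<0$ (concave) iff $\rho_0(y)>1/3$, with the inflection exactly at $\rho_0=1/3$.

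To finish, I would translate this into the two cases of the statement using the boundary data and Lemma~\ref{lemma:intermediateR}. At the tail, $\rho_0(0)$ is given by \eqref{eq:rho_0(0)}; combined with \eqref{eq:mu_1} and the relation $a=(1-\mu_0)C/(x_0(1-r)(1-h_0))$, one checks $\rho_0(0)=\big(1+a\big)^{-1}$ and, using $x_0(1-r)(1-h_0)\approx$ the UDP queueing rate and $\mu_0\le x_0(1-r)/C$, that $x_0\le C/2$ forces $\rho_0(0)\le 1/3$; since $\rho_0$ is decreasing, $\rho_0(y)\le\rho_0(0)\le1/3$ for all $y$, so $\rho_0$ is strictly convex decreasing — part (a). (Strictness comes from $\rho_0'(y)\ne 0$.) For $x_0>C/2$ one has $\rho_0(0)>1/3$ while $\rho_0(b)=\mu_0<(e+1)^{-1}<1/3$ by the steady-state limit, so by the intermediate value theorem and monotonicity there is a unique $y^*$ with $\rho_0(y^*)=1/3$; $\rho_0$ is concave on $[0,y^*]$ and convex on $[y^*,b]$ — part (b). Finally, plugging $\rho_0(y^*)=1/3$ into the explicit position formula \eqref{eq:qPositionYv1} yields
\begin{equation}\nonumber
y^*=\frac{1}{K}\ln\!\Big(\frac{a}{2}\Big)+\frac{1}{K}\,\frac{1/3-\rho_0(0)}{(2/3)(1-\rho_0(0))}=\frac{1}{K}\ln\!\Big(\frac{a}{2}\Big)+\frac{1}{K}\,\frac{1-3\rho_0(0)}{2(1-\rho_0(0))},
\end{equation}
as claimed. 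The main obstacle I anticipate is not the convexity computation (which is clean once written as $\rho_0''=-\rho_0'^2 g'/g$) but carefully justifying the threshold $\rho_0(0)\lessgtr 1/3 \iff x_0\lessgtr C/2$ from the boundary conditions — this needs Lemma~\ref{lemma:intermediateR} (the bound $1\le(1-\mu_0)/(1-h_0)\le 2$) to control the ratio $a$ and hence $\rho_0(0)$, and it is where the "$x_0\le C/2$" hypothesis actually enters; I would also double-check the endpoint case $x_0=C/2$ (where $\rho_0(0)=1/3$ exactly, so $y^*=0$ and the queue is convex throughout, consistent with part (a) being stated with "$\leqslant$").
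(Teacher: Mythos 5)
Your proposal is correct and follows essentially the same route as the paper: the sign of $\rho_0''(y)$ is shown to be that of $1-3\rho_0(y)$ (the paper gets this by differentiating $\rho_0'(y)=x_0(1-r)\beta\,\rho_0(1-\rho_0)/v(y)$ directly, you by differentiating the separated ODE $\rho_0'\,g(\rho_0)=K$ --- an equivalent, slightly cleaner computation), the threshold $x_0^*=C/2$ is obtained exactly as in the paper from $\rho_0(0)=(1+a)^{-1}$ and Lemma~\ref{lemma:intermediateR}, and $y^*$ follows by substituting $\rho_0^*=1/3$ into \eqref{eq:qPositionYv1}. Your use of $\rho_0(b)=\mu_0\le(e+1)^{-1}<1/3$ for case (b) also matches the paper's appeal to the steady-state limit property, so no further comparison is needed.
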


For $v(y)$, applying \eqref{eq:rho_0y} to \eqref{eq:v(y)} first, then the following property can be verified from \eqref{eq:v(y)}
    \begin{lemma}\label{eq:lemmavY}
        Given $x_0>0$, and boundary values $v(0)$ and $v(b)$ defined respectively by \eqref{eq:icv(y)1} and \eqref{eq:icAtb}, the packet velocity $v(y)$ is convex decreasing throughout the queue.
    \end{lemma}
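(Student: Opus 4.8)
The plan is to avoid differentiating the quotient $v(y)=(1-\mu_0)C/(1-\rho_0(y))$ in \eqref{eq:v(y)} twice, and instead to first extract a compact closed form for $v'(y)$, from which both monotonicity and convexity drop out by inspecting signs. The key identity I aim for is $v'(y)=x_0(1-r)\,\beta\,\rho_0(y)$.

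To get this identity I would combine the UDP thinning relation \eqref{eq:chokeSteadyThinning} with the constant TCP velocity \eqref{eq:tcpThroughput} and the partition \eqref{eq:initialConditions1}, writing the total velocity as a constant part plus a thinned part,
\[
v(y)=(1-\mu_0)C+\rho_0(0)v(0)\,e^{x_0(1-r)\beta\tau(y)},
\]
with $\beta=\ln(1-1/b)$. Differentiating in $y$, using $\tau'(y)=1/v(y)$ from \eqref{eq:qDelay}, and then folding $\rho_0(0)v(0)e^{x_0(1-r)\beta\tau(y)}$ back into $\rho_0(y)v(y)$ via \eqref{eq:chokeSteadyThinning}, the $v(y)$ factors cancel and leave exactly $v'(y)=x_0(1-r)\beta\rho_0(y)$. (The same identity also falls out of the ODE \eqref{eq:lnThinningv2} after substituting \eqref{eq:v(y)} for $v(y)$; I would keep whichever derivation is shorter.)

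The two claims then follow from signs. Since $0<1/b<1$ we have $\beta<0$, and for $x_0>0$ the probability $\rho_0(y)$ is strictly positive on the whole interval $[0,b]$ (indeed $\rho_0(b)=\mu_0>0$ and $\rho_0(0)<1$), so $v'(y)<0$: the velocity is strictly decreasing, consistent with the boundary values $v(0)\ge C=v(b)$. Differentiating the identity once more gives $v''(y)=x_0(1-r)\beta\,\rho_0'(y)$; from Lemma~\ref{eq:lemmaRho_0Y} (equivalently, directly from \eqref{eq:lnThinningv2}, whose right-hand side is negative while the bracket $\tfrac{1}{\rho_0}+\tfrac{1}{1-\rho_0}$ is positive) we have $\rho_0'(y)<0$, so the product of the negative factors $\beta$ and $\rho_0'(y)$ with the positive $x_0(1-r)$ yields $v''(y)>0$. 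Hence $v$ is convex decreasing throughout $[0,b]$.

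I do not expect a genuine obstacle once the identity $v'(y)=x_0(1-r)\beta\rho_0(y)$ is in hand; the only care needed is the bookkeeping of the sign of $\beta$ (negative, reflecting that thinning weakens the UDP velocity) and checking that $\rho_0$ and $\rho_0'$ keep their strict signs on the closed interval, including at the endpoints $y=0$ and $y=b$, not merely in the interior.
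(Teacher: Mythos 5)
Your proposal is correct and follows essentially the same route as the paper: the appendix establishes exactly the identity $v'(y)=x_0(1-r)\beta\rho_0(y)$ (its Eq.~\eqref{eq:diffVy}), reads off $v'(y)<0$ from $\beta<0$, and then differentiates once more, substituting the explicit form of $\rho_0'(y)$ to get $v''(y)=x_0^2(1-r)^2\beta^2\rho_0(y)(1-\rho_0(y))/v(y)>0$. Your only cosmetic deviation is deriving the identity from the thinning decomposition rather than from the ODE, and arguing $v''>0$ via the sign of $\rho_0'$ instead of the fully explicit expression; both are sound.
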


For $\tau(y)$, the following property holds.
    \begin{lemma}\label{eq:lemmaTauY}
        Given $x_0>0$, $\tau(0)=0$ and $\tau(b)$ defined by \eqref{eq:fullDelay},  queueing delay $\tau(y)$ is strictly convex increasing.
    \end{lemma}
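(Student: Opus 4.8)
The plan is to push everything through the relation $\tau'(y)=1/v(y)$, which \eqref{eq:qDelay} already supplies, so that the monotonicity and convexity of $\tau$ become direct consequences of facts about $v(y)$ established earlier in Sec.~\ref{subsec:modifiedSDM}.

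First I would settle strict monotonicity. By \eqref{eq:qDelay}, $\tau'(y)=1/v(y)$, and $v(y)>0$ on $[0,b]$ --- either because $v(b)=C>0$ by \eqref{eq:icAtb} and $v$ is decreasing by Lemma~\ref{eq:lemmavY}, so $v(y)\geq C$, or directly from \eqref{eq:v(y)} since $\rho_0(y)<1$ and $\mu_0<1$. Hence $\tau'(y)>0$ throughout, so with $\tau(0)=0$ the map $\tau$ is strictly increasing; the boundary value $\tau(b)$ is exactly the one given by \eqref{eq:fullDelay} (already obtained via Little's law), so nothing further is owed there.

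For convexity I would substitute the closed form \eqref{eq:v(y)} into $\tau'(y)=1/v(y)$, giving $\tau'(y)=(1-\rho_0(y))/((1-\mu_0)C)$, and then differentiate once more: $\tau''(y)=-\rho_0'(y)/((1-\mu_0)C)$. Since $\rho_0(y)$ is strictly decreasing on $[0,b]$ --- Lemma~\ref{eq:lemmaRho_0Y}, in both cases (a) and (b) --- we have $\rho_0'(y)<0$, hence $\tau''(y)>0$, so $\tau$ is strictly convex. An equivalent, fully self-contained variant is to insert Lemma~\ref{lemma-rho0(y)} into $\tau'(y)=(1-\rho_0(y))/((1-\mu_0)C)$, which turns it into the autonomous ODE $\tau'(y)=a/((1-\mu_0)C\,(a+e^{x_0(1-r)\beta\tau(y)}))$ with $a>0$; its right-hand side is positive and, because $\beta=\ln(1-1/b)<0$ and $x_0(1-r)>0$, strictly increasing in $\tau$, which yields $\tau'>0$ and $\tau''>0$ simultaneously without invoking Lemmas~\ref{eq:lemmavY} and \ref{eq:lemmaRho_0Y}.

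The argument is essentially bookkeeping; the one place to be careful is strictness --- asserting $v'(y)<0$, equivalently $\rho_0'(y)<0$, rather than a weak inequality --- which is why I would rely on the explicit expressions \eqref{eq:v(y)} and \eqref{eq:rho_0y} instead of the qualitative word ``decreasing'' alone. Differentiability of $\tau$ to second order is inherited from the smoothness of the ODE solution $\rho_0$ in Lemmas~\ref{lemma-rho0(y)}--\ref{lemma-y}, so the computation of $\tau''$ is legitimate.
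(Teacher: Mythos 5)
Your proof is correct and takes essentially the same route as the paper: both establish $\tau'(y)=1/v(y)>0$ and then show $\tau''(y)>0$ by reducing it to the strict negativity of $v'(y)$ (equivalently $\rho_0'(y)$), the paper via $\tau''(y)=-v'(y)/v^2(y)=-x_0(1-r)\beta\rho_0(y)/v^2(y)>0$ using \eqref{eq:diffVy}, you via the algebraically identical $\tau''(y)=-\rho_0'(y)/((1-\mu_0)C)$ obtained from \eqref{eq:v(y)}. Your attention to strictness by working from the explicit derivative formulas mirrors exactly what the paper does.
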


        To visualize the queue dynamics, Fig.~\ref{fig:ss_CHOKEproperties} plots for each queue position $y$ the probabilities $\rho_0(y)$, the packet velocities $v(y)$  and the queueing delays $\tau(y)$. For low and moderate rate $x_0$, the spatial properties  $v(y)$ and $\rho_0(y)$ look uniform in queue, and the queueing delay $\tau(y)$ is approximately linearly rising, like in a regular non-leaky queue. With increasing rate $x_0$, however, the spatial distribution becomes increasingly asymmetrical in queue, with most of the UDP packets piled up closer to the tail and the packet velocity sharply decreasing to $C$ as we move towards the head of queue. In such cases, since the packet velocities are nonuniform in queue, the queueing delays are also nonuniform. When $x_0\rightarrow \infty$,  the queueing delay becomes $\frac{1}{2}\frac{b}{C}$---half of the queueing delay possible in a non-leaky queue with the same backlog size $b$.

    \subsection{Analysis on the Transient Behavior}\label{subsec:transient}
    The transient behavior is a transition between two stable queue states excited by a change in UDP arrival rate. Without loss of generality, let us assume that the UDP arrival rate changes from $x_0$ to $x_{02}$ at time $t=0$. At that instant, the snapshot of the queue exhibits the steady state characteristics defined by $x_0$ (see Fig.~\ref{fig:udpVelocityDecay}). We are interested in transient behaviors that surface within a time of $\tau(b)$ (the full queueing delay) after rate change. Thereafter, we assume the queue enters the steady state defined by new rate $x_{02}$. We shall focus on the transition regime in $[0,\tau(b)]$.

While Lemma \ref{lemma-rho0(y)} and Lemma \ref{lemma-y} lay a foundation for the analysis, we still need to make assumptions to realize it. Specifically, we {\em boldly} assume, during $[0,\tau(b)]$,
    \begin{enumerate}[I: ]
        \item the buffer occupancy remains constant at $b$, and \label{tran:AI}
        \item TCP arrival rates remain the same.   \label{tran:AII}
    \end{enumerate}

    {\bf Remark:} It is worth highlighting that while Assumption \ref{tran:AII} is valid or accurate based on our observation of slow TCP reactions in Sec.~\ref{subsec:rateConservation}, Assumption \ref{tran:AI} may be very conservative (see Fig.~\ref{fig:12BufferChange}). Incidentally, this assumption may be the cause of approximation errors in the analysis. Nevertheless, results based on these assumptions are highly satisfactory.


\vspace{5mm}

   We are now ready to conduct the analysis. Consider at time $t=0$ the UDP fluid at position $y$. It has a velocity $\rho_0(y)v(y)$, and a remaining age in queue of $\tau(b)-\tau(y)$. Owing to FIFO, the UDP flow transmission rate at time $\tau(b)-\tau(y)$ is due to the transmission of this UDP fluid. By transmission time $\tau(b)-\tau(y)$, the UDP velocity has been thinned according to the new rate $x_{02}$ as:
    \begin{equation}\label{eq:newThinnedv_0}
        \rho_0(y)v(y)(1-1/b)^{x_{02}(1-r)(\tau(b)-\tau(y))}.
    \end{equation}

{\bf Remark:} \eqref{eq:newThinnedv_0} is similar to \eqref{eq:chokeSteadyThinning}. Unlike \eqref{eq:chokeSteadyThinning}, however, there are two distinct parts of thinning for the UDP fluid located at slot $y\in[0,b]$ at $t=0$. First, in going from tail to $y$, the thinning is according to $x_0$, and the duration of thinning is the queueing delay so far, i.e., $\tau(y)$.  This part is reflected in \eqref{eq:newThinnedv_0}  by $\rho_0(y)v(y)$. Second, for the remaining duration $\tau(b)-\tau(y)$, the thinning is due to the new rate $x_{02}$. Overall thinning, before eventual transmission, of the UDP fluid found at slot $y$  at $t=0$ is then,
\begin{equation}\nonumber
 \rho_0(0)v(0)(1-1/b)^{x_0(1-r) \tau(y)}(1-1/b)^{x_{02} (1-r)(\tau(b)-\tau(y))}.
\end{equation}

Above, we use the same RED / congestion-based drop probability $r$ even when the UDP arrival rate changes. In fact, our extensive simulations show that  $r$ is often insignificant and can be ignored altogether. CHOKe's excessive flow dropping keeps the average queue size $\mathit{avg}$ in check, and this in turn lowers the $r$ in comparison to that in plain RED. Hereafter we choose to ignore $r$. That is,
     \begin{align}\label{eq:neglectRED}
        r &\approx 0.
    \end{align}
Here, we would like to remark that similar observation or assumption has been made for CHOKe analysis in the literature~\cite{ChokeSigmetrics03}.

    With Assumption \ref{tran:AII}, the total packet velocity $v$ at time $\tau(b)-\tau(y)$ is the sum:
    \begin{align} \label{eq:newThinnedvy}
                v(\tau(b)-\tau(y))&=\rho_0(y)v(y)(1-1/b)^{x_{02}(\tau(b)-\tau(y))} \nonumber \\
                                        & \qquad +(1-\mu_0)C
    \end{align}
where the second term represents the velocity contributed by the TCP flows, given by \eqref{eq:tcpThroughput}.

The instantaneous UDP link utilization $\mu_0$ follows from \eqref{eq:newThinnedv_0} and \eqref{eq:newThinnedvy} simply as,
    \begin{equation}
        \mu_0(\tau(b)-\tau(y)) = \frac{\rho_0(y)v(y)(1-1/b)^{x_{02}(\tau(b)-\tau(y))}}{v(\tau(b)-\tau(y))}
    \end{equation}

Note that, with \eqref{eq:v(y)}, which holds for both the steady-state and the transient regime, we can express the UDP packet velocity, $\rho_0(y)v(y)$,  as
    \begin{align}
            \rho_0(y)v(y)&= \frac{\rho_0(y)}{1-\rho_0(y)}(1-\mu_0) C    \label{eq:v_0(y)}
    \end{align}
with which, we further obtain
    \begin{subequations}
    \begin{align}
         \mu_0(\tau(b)-\tau(y)) &= \bigg[1+\frac{1-\rho_0(y)}{\rho_0(y)}\bigg(1-\frac{1}{b}\bigg)^{-x_{02}(\tau(b)-\tau(y))}\bigg]^{-1}  \nonumber \\
                                   &= \bigg[1+\frac{1-\rho_0(y)}{\rho_0(y)}e^{-x_{02}\beta(\tau(b)-\tau(y))}\bigg]^{-1} \label{eq:newThinnedu_01}\\
                                   &= \bigg[1+a e^{-x_{02}\beta\tau(b)+\beta \tau(y)(x_{02}-x_0)}\bigg]^{-1}    \label{eq:newThinnedu_02}
    \end{align}
    \end{subequations}
    In obtaining the reduced forms \eqref{eq:newThinnedu_01},\eqref{eq:newThinnedu_02}, we used \eqref{eq:v_0(y)} and \eqref{eq:rho_0y} respectively. 

 \eqref{eq:newThinnedu_02} captures the evolution of UDP utilization during the transient regime. We summarize it in the following lemma.

 \begin{lemma}\label{eq:udpTranUtil}
 Assume at $t=0$, UDP arrival rate changes from $x_0$ to $x_{02}$. The UDP link utilization at time $\Delta T \in[0,\tau(b)]$  is given by, 
     \begin{align}\nonumber
        \mu_0(\Delta T)&=\bigg[1+ae^{-x_{02}\beta \tau(b)+\beta (\tau(b)-\Delta T)(x_{02}-x_0)} \bigg]^{-1}.
     \end{align}
     where, $b$ is backlog size at $t=0$ and $\tau(b)$ is given by \eqref{eq:fullDelay}.
 \end{lemma}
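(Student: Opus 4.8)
The statement is, up to a change of variable, a repackaging of \eqref{eq:newThinnedu_02}: writing $\Delta T := \tau(b)-\tau(y)$ turns $\mu_0(\tau(b)-\tau(y))$ into $\mu_0(\Delta T)$. So the plan is (i) to re-derive \eqref{eq:newThinnedu_02} carefully from the thinning picture, taking care to separate the ``pre-change'' and ``post-change'' portions of the thinning of a tagged fluid element, and then (ii) to invert the map $y\mapsto\Delta T$ using monotonicity of $\tau$. First I would freeze the configuration at $t=0$: by hypothesis the queue is in the steady state for the old rate $x_0$, so the spatial quantities $\rho_0(y)$, $v(y)$, $\tau(y)$ are exactly those supplied by Lemma \ref{lemma-rho0(y)} and Lemma \ref{lemma-y} for that rate, and are treated henceforth as constants. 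Under Assumption \ref{tran:AI} the backlog remains $b$, so by FIFO the fluid element sitting at slot $y$ at $t=0$ is transmitted at time $\Delta T=\tau(b)-\tau(y)$; under Assumption \ref{tran:AII} together with \eqref{eq:tcpThroughput} the TCP contribution to the total velocity is the constant $(1-\mu_0)C$ over all of $[0,\tau(b)]$.

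Next I would track the thinning of the tagged UDP element. The thinning it already suffered while travelling from the tail to slot $y$ is precisely what is recorded in its current velocity $\rho_0(y)v(y)$, by \eqref{eq:chokeSteadyThinning}. During its remaining sojourn of length $\Delta T$ it is exposed to arrivals at rate $x_{02}(1-r)$, each independently triggering a match with probability $1/b$, so it survives all of them with probability $(1-1/b)^{x_{02}(1-r)\Delta T}$; multiplying gives the transmitted UDP velocity \eqref{eq:newThinnedv_0}. Adding the TCP term gives the total velocity \eqref{eq:newThinnedvy} at the transmission instant, and the instantaneous UDP utilization is the ratio of the two. From here it is bookkeeping: invoke \eqref{eq:neglectRED} to drop $r$, substitute \eqref{eq:v_0(y)} for $\rho_0(y)v(y)$ so that the factor $(1-\mu_0)C$ cancels, yielding \eqref{eq:newThinnedu_01}; then rearrange \eqref{eq:rho_0y} into $(1-\rho_0(y))/\rho_0(y)=a\,e^{-x_0\beta\tau(y)}$ and collect the exponents of $\beta$ to obtain \eqref{eq:newThinnedu_02}.

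It remains only to change variables. By Lemma \ref{eq:lemmaTauY}, $\tau$ is continuous and strictly increasing with $\tau(0)=0$ and maximal value $\tau(b)$; hence $y\mapsto\Delta T=\tau(b)-\tau(y)$ is a continuous decreasing bijection of $[0,b]$ onto $[0,\tau(b)]$. Thus for each $\Delta T\in[0,\tau(b)]$ there is a unique slot $y$ with $\tau(y)=\tau(b)-\Delta T$; substituting this into \eqref{eq:newThinnedu_02} gives exactly $\mu_0(\Delta T)=[1+a\,e^{-x_{02}\beta\tau(b)+\beta(\tau(b)-\Delta T)(x_{02}-x_0)}]^{-1}$, with $b$ the backlog at $t=0$ and $\tau(b)$ given by \eqref{eq:fullDelay}.

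The one genuinely delicate point I anticipate is the bookkeeping in the thinning step: justifying that the pre-$t=0$ thinning is faithfully carried by $\rho_0(y)v(y)$, that the post-$t=0$ thinning is governed solely by the new rate $x_{02}$ over precisely the window $[0,\Delta T]$, and that the two stages multiply because the matching trials of distinct arrivals are independent. This rests squarely on Assumption \ref{tran:AI} (constant $b$, hence a constant per-arrival match probability $1/b$ and an unchanged slot-to-delay correspondence) and on the FIFO transmission-time identity; everything downstream of \eqref{eq:newThinnedvy} is routine manipulation of \eqref{eq:v_0(y)}, \eqref{eq:rho_0y}, and the monotonicity of $\tau$.
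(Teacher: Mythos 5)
Your proposal is correct and follows essentially the same route as the paper: the lemma is exactly \eqref{eq:newThinnedu_02} under the substitution $\Delta T=\tau(b)-\tau(y)$, and your two-stage thinning argument (pre-change thinning carried by $\rho_0(y)v(y)$, post-change thinning at rate $x_{02}$ over the remaining sojourn, TCP velocity frozen at $(1-\mu_0)C$ by Assumptions \ref{tran:AI}--\ref{tran:AII}, then simplification via \eqref{eq:v_0(y)} and \eqref{eq:rho_0y} with $r\approx 0$) reproduces the paper's derivation leading to \eqref{eq:newThinnedu_01}--\eqref{eq:newThinnedu_02}. Your explicit appeal to the strict monotonicity of $\tau$ (Lemma~\ref{eq:lemmaTauY}) to justify the change of variables is a small tightening of a step the paper leaves implicit.
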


    It is trivial to see that $\mu_0(0)=\mu_0$. In addition, it is easy to prove that when $x_{02}=x_0$, $\mu_0(\Delta T)=\mu_0$ for $\forall \Delta T \in [0,\tau(b)]$. The proof is similar to the proof of Lemma \ref{lemma:alpha1}.

    \newtheorem{theorem}{Theorem}
    \begin{theorem}\label{eq:Thorem_maxu(t)_0}
       Assume a CHOKe queue characterized by steady state UDP probabilities $\rho_0(y)$, $y\in[0,b]$ and input UDP rate $x_0$. Further assume the UDP rate changes to $x_{02}\geq 0$ at $t=0$.
        \begin{enumerate}[(a)]
          \item The transient UDP utilization is upper bounded by $\rho_0(0)$.
          \item This upper bound can be achieved when $x_{02}=0$ and at time $\tau(b)$.
        \end{enumerate}
    \end{theorem}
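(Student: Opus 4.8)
The plan is to work straight from the closed form for the transient utilization obtained in Lemma~\ref{eq:udpTranUtil},
\[\mu_0(\Delta T) = \Bigl[1 + a\, e^{-x_{02}\beta\tau(b) + \beta(\tau(b)-\Delta T)(x_{02}-x_0)}\Bigr]^{-1},\]
and convert the target bound $\mu_0(\Delta T)\le\rho_0(0)$ into a statement about the sign of the exponent. First I would record two elementary facts used throughout: $\rho_0(0)\in(0,1)$ (immediate from \eqref{eq:rho_0(0)} with $h_0<1$, $x_1>0$), so that $a=\tfrac{1-\rho_0(0)}{\rho_0(0)}>0$ and $1+a=1/\rho_0(0)$; and $\beta=\ln(1-1/b)<0$. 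Since $u\mapsto(1+a e^{u})^{-1}$ is strictly decreasing when $a>0$, the inequality $\mu_0(\Delta T)\le\rho_0(0)=(1+a)^{-1}$ is \emph{equivalent} to the exponent being $\ge 0$.

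Next I would factor $\beta$ out of the exponent and simplify. Writing the exponent as $\beta\bigl[-x_{02}\tau(b)+(\tau(b)-\Delta T)(x_{02}-x_0)\bigr]$, the $x_{02}$ terms collapse and the bracket becomes $-x_{02}\Delta T - x_0(\tau(b)-\Delta T)$, so the exponent equals $-\beta\bigl[x_{02}\Delta T + x_0(\tau(b)-\Delta T)\bigr]$. Because $\beta<0$, this is $\ge 0$ exactly when
\[x_{02}\,\Delta T + x_0\,(\tau(b)-\Delta T)\ \ge\ 0,\]
which holds for every $\Delta T\in[0,\tau(b)]$ and every $x_{02}\ge0$ since $x_0>0$, $\Delta T\ge0$ and $\tau(b)-\Delta T\ge0$. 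This proves part~(a).

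For part~(b) I would read off the equality case: the exponent vanishes iff $x_{02}\Delta T + x_0(\tau(b)-\Delta T)=0$, and since $x_0>0$ and $\tau(b)>0$ this forces $\Delta T=\tau(b)$ and then $x_{02}=0$. At that point $e^{(\cdot)}=1$ and $\mu_0(\tau(b))=(1+a)^{-1}=\rho_0(0)$, so the bound is attained precisely when the flow stops and we wait a full queueing delay. I would add one sentence of intuition: with $x_{02}=0$ the exponent is $-\beta x_0(\tau(b)-\Delta T)\ge 0$ and decreases to $0$ as $\Delta T\uparrow\tau(b)$, so $\mu_0$ climbs monotonically up to $\rho_0(0)$, which is exactly the near-full-utilization spike seen in the motivating examples.

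There is no real obstacle here; the argument is a short chain of equivalences built on Lemma~\ref{eq:udpTranUtil}. The only points requiring care are bookkeeping the sign of $\beta$ when the exponent inequality is divided through, and checking at the outset that $\rho_0(0)\in(0,1)$ so that $a>0$ and the monotonicity/factoring manipulations are valid.
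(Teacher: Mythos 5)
Your proof is correct, and it takes a slightly different route from the paper's. The paper argues from the form \eqref{eq:newThinnedu_01}, bounding the two factors separately: since $\beta<0$ and $\tau(b)\ge\tau(y)$ the exponential factor $e^{-x_{02}\beta(\tau(b)-\tau(y))}$ is $\ge 1$, which gives $\mu_0(\tau(b)-\tau(y))\le\rho_0(y)$, and then it invokes the monotonicity of $\rho_0(\cdot)$ from Lemma~\ref{eq:lemmaRho_0Y} to conclude $\rho_0(y)\le\rho_0(0)$; equality at $x_{02}=0$ is read off because the exponential collapses to $1$ and $\mu_0$ then traces out $\rho_0(y)$ in reverse time. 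You instead work from the single-exponent form of Lemma~\ref{eq:udpTranUtil}, collapse the exponent to $-\beta\bigl[x_{02}\Delta T+x_0(\tau(b)-\Delta T)\bigr]$, and reduce the whole theorem to the sign of a nonnegative combination. What your version buys: it bypasses Lemma~\ref{eq:lemmaRho_0Y} entirely (needing only $a>0$ and $\beta<0$), and it upgrades part~(b) to a necessary-and-sufficient equality condition ($\Delta T=\tau(b)$ \emph{and} $x_{02}=0$), which the paper only states as sufficient. What the paper's version buys is the intermediate identity $\mu_0(\tau(b)-\tau(y))=\rho_0(y)$ for all $y$ when $x_{02}=0$, i.e., the structural picture that the stopped flow replays the spatial profile $\rho_0(\cdot)$ as a time series --- a fact you recover only as a closing remark. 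Both arguments are sound; just make sure you state explicitly (as the paper implicitly does) that $x_0>0$ and $\tau(b)>0$ when you derive the equality case.
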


    \begin{proof}
      Since $\tau(b)\geqslant \tau(y)$ (see Lemma~\ref{eq:lemmaTauY}) and $\beta <0$ in \eqref{eq:newThinnedu_01}, we have $\exp(-x_{02}\beta(\tau(b)-\tau(y)))
      \geqslant \exp(0)=1$ and
      \begin{align}\label{eq:u(t)Simplified}
        \mu_0(\tau(b)-\tau(y)) & \leqslant \bigg[1+\frac{1-\rho_0(y)}{\rho_0(y)} \bigg]^{-1} \nonumber \\
                                   &\leqslant  \rho_0(y) \leqslant \rho_0(0)
      \end{align}

      In \eqref{eq:u(t)Simplified}, we used the property that $\rho_0(y)$ is a decreasing function (see Lemma~\ref{eq:lemmaRho_0Y}) to state that $\mu_0(\tau(b)-\tau(y))\leqslant \rho_0(0)$.
      When $x_{02}=0$ in \eqref{eq:newThinnedu_01}, $\mu_0(\tau(b)-\tau(y)) = \rho_0(y)$. See Fig.~\ref{fig:relation-rho0(y)-udpTran} for the relationship between $\rho_0(y)$ in queue and transient UDP rate.
    \end{proof}

    \begin{figure}[h!]
            \centering
            \includegraphics[width=0.45\textwidth]{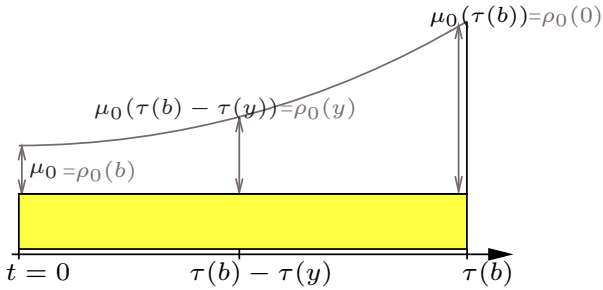}
            \caption{Relationship between steady-state $\rho_0(y)$ and transient UDP utilization $\mu_0(t)$ when the flow stops, i.e., $x_{02}=0$.}
            \label{fig:relation-rho0(y)-udpTran}
    \end{figure}

    Theorem \ref{eq:Thorem_maxu(t)_0} states that if the UDP flow stops ($x_{02}=0$), it will attain exactly the utilizations $\rho_0(y)$ in reverse order of time shown in Fig.~\ref{fig:udpVelocityDecay}. That means, the steady-state probabilities $\rho_0(y)$ associated with the previous UDP input rate $x_0$ successively turn out as transient utilizations when the flow stops. Therefore, the transient utilizations increase from $\rho_0(b)=\mu_0$ at $t=0$ to $\rho_0(0)$ at $t=\tau(b)$.

    Theorem ~\ref{eq:Thorem_maxu(t)_0} vindicates our choice to leverage steady-state CHOKe model for understanding the transient behavior.

     So far we have discussed the two special cases when $x_{02}=x_0$ (no change) and $x_{02}=0$. Now consider a scenario where  $x_{02}\notin \{0,x_0\}$, in particular $x_{02}\rightarrow \infty$. It is plausible that when $x_{02}$ is high, the exponential terms in \eqref{eq:newThinnedu_01}, or equivalently in Lemma~\ref{eq:udpTranUtil}, may become so large that the utilization may quickly plunge to very low values. From our observation, we remark that the \emph{UDP utilization in the transient phase moves in the opposite direction to the change of UDP input rate that triggers the phase. } Since the UDP rate change impacts the rate $db_0/dt$, the above observation is coincidentally the same as the one noted in Sec.~\ref{subsec:rateConservation}.

     We now generalize our findings and obtain the extreme values. Note that from Lemma~\ref{eq:udpTranUtil}, $\mu_0(\Delta T)$ is decreasing or increasing with $\Delta T$, depending on whether $x_{02}$ is greater than $x_0$ or not, so the extreme is obtained when $\Delta T= \tau(b)$. In other words, the extreme (lowest or largest) values occur $\tau(b)$ after rate change. That means, the last packet of the old rate $x_0$ is transmitted with the extreme utilization $\mu_0^{*}=\mu_0(\tau(b))$.  The next theorem gives the maximum or minimum value. First, we start with Lemma \ref{lemma:alpha1} which captures the special case when $\alpha=1$.

    \begin{lemma}\label{lemma:alpha1}
        For $\alpha=1$, $\mu_0(\tau(b))=\mu_0$.
    \end{lemma}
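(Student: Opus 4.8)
The plan is to prove Lemma~\ref{lemma:alpha1} by a direct substitution into the closed form of Lemma~\ref{eq:udpTranUtil}, followed by identifying the resulting expression with a known boundary value. Recall that $\alpha$ denotes the UDP rate-change factor, so the hypothesis $\alpha=1$ is exactly the case $x_{02}=x_0$.

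First I would evaluate the formula of Lemma~\ref{eq:udpTranUtil} at the right endpoint $\Delta T=\tau(b)$. The term that carries the rate change enters the exponent with coefficient $\beta(\tau(b)-\Delta T)$, which vanishes at $\Delta T=\tau(b)$; hence, for an arbitrary $x_{02}$,
\[
  \mu_0(\tau(b))=\Big[1+a\,e^{-x_{02}\beta\tau(b)}\Big]^{-1}.
\]
Specialising to $\alpha=1$, i.e. $x_{02}=x_0$, yields $\mu_0(\tau(b))=\big[1+a\,e^{-x_0\beta\tau(b)}\big]^{-1}$.

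Next I would invoke Lemma~\ref{lemma-rho0(y)} together with the simplification $r\approx0$ of \eqref{eq:neglectRED}, namely $\rho_0(y)=e^{x_0\beta\tau(y)}/\big(a+e^{x_0\beta\tau(y)}\big)$, and evaluate it at the head of the queue, $y=b$:
\[
  \rho_0(b)=\frac{e^{x_0\beta\tau(b)}}{a+e^{x_0\beta\tau(b)}}=\Big[1+a\,e^{-x_0\beta\tau(b)}\Big]^{-1}.
\]
Since the boundary condition \eqref{eq:icAtb} gives $\rho_0(b)=\mu_0$, the two displayed quantities coincide, so $\mu_0(\tau(b))=\mu_0$. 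Equivalently, one may observe that when $x_{02}=x_0$ the exponent in \eqref{eq:newThinnedu_02} collapses to $-x_0\beta\tau(b)$, which is independent of $\tau(y)$ (hence of $\Delta T$), so $\mu_0(\Delta T)$ is constant on $[0,\tau(b)]$ and equals $\mu_0(0)=\mu_0$.

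There is essentially no obstacle here: the proof is a two-line verification. The only points deserving a word of care are (i) that $r$ has already been neglected, so the exponents appearing in Lemma~\ref{eq:udpTranUtil} and Lemma~\ref{lemma-rho0(y)} line up exactly, and (ii) that evaluating at $\Delta T=\tau(b)$ annihilates the rate-change term --- the very same observation that, for general $x_{02}$, produces the extreme-value expression $\big[1+a\,e^{-x_{02}\beta\tau(b)}\big]^{-1}$ that is exploited in the theorem following this lemma.
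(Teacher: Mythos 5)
Your proof is correct and follows essentially the same route as the paper: substitute $\alpha=1$, $\Delta T=\tau(b)$ into Lemma~\ref{eq:udpTranUtil} and recognize the resulting expression $\bigl[1+a\,e^{-x_0\beta\tau(b)}\bigr]^{-1}$ as $\rho_0(b)=\mu_0$. The only cosmetic difference is that you invoke Lemma~\ref{lemma-rho0(y)} at $y=b$ to make that identification, whereas the paper unwinds the expression back through the thinning relation \eqref{eq:chokeSteadyThinning} and the boundary conditions \eqref{eq:icAtb} and \eqref{eq:tcpThroughput} --- the same computation in different packaging.
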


     \begin{theorem}\label{eq:Thorem_maxu(t)_0Alpha}
        Assume the current steady-state utilization $\mu_0$ when the input UDP rate is $x_0$. If $x_{02}=\alpha x_0$ at $t=0$, $\alpha \in[0,\infty)$, then the extreme (minimum / maximum) UDP utilization during the transient regime is given by
        \begin{equation}\label{eq:alphaU}
            \mu_0(\tau(b))= \bigg[1+a\bigg(\frac{1-\mu_0}{a\mu_0}\bigg)^{\alpha}\bigg]^{-1}.
        \end{equation}
    \end{theorem}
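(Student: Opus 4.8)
The plan is to read the result off Lemma~\ref{eq:udpTranUtil} directly by evaluating it at the right endpoint of the transient window. As observed in the paragraph preceding the theorem, the exponent $-x_{02}\beta\tau(b)+\beta(\tau(b)-\Delta T)(x_{02}-x_0)$ is affine in $\Delta T$ with slope $\beta(x_0-x_{02})$, and since $\beta=\ln(1-1/b)<0$ this slope has the sign of $x_{02}-x_0$; hence $\mu_0(\Delta T)$ is monotone on $[0,\tau(b)]$ and its extreme (a minimum if $x_{02}>x_0$, a maximum if $x_{02}<x_0$) is attained at $\Delta T=\tau(b)$. Setting $\Delta T=\tau(b)$ annihilates the middle term, leaving
\[
\mu_0(\tau(b))=\Big[1+a\,e^{-x_{02}\beta\tau(b)}\Big]^{-1}.
\]

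Next I would insert the hypothesis $x_{02}=\alpha x_0$ and factor the exponential as $e^{-\alpha x_0\beta\tau(b)}=\big(e^{-x_0\beta\tau(b)}\big)^{\alpha}$. The only substantive step is to rewrite the base $e^{-x_0\beta\tau(b)}$ in terms of the current steady-state utilization $\mu_0$. For this I would use Lemma~\ref{lemma-rho0(y)}, under the simplification $r\approx0$ from~\eqref{eq:neglectRED}, evaluated at the head of the queue $y=b$, where the boundary condition~\eqref{eq:icAtb} reads $\rho_0(b)=\mu_0$. This gives $\mu_0=e^{x_0\beta\tau(b)}\big/\big(a+e^{x_0\beta\tau(b)}\big)$, a linear equation in $e^{x_0\beta\tau(b)}$ whose solution is $e^{x_0\beta\tau(b)}=a\mu_0/(1-\mu_0)$, so $e^{-x_0\beta\tau(b)}=(1-\mu_0)/(a\mu_0)$. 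Equivalently, the same identity falls out of Lemma~\ref{eq:udpTranUtil} at $\Delta T=0$ together with the already-noted fact $\mu_0(0)=\mu_0$.

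Combining the two displays yields $\mu_0(\tau(b))=\big[1+a\big((1-\mu_0)/(a\mu_0)\big)^{\alpha}\big]^{-1}$, which is exactly~\eqref{eq:alphaU}. I would finish with two zero-cost sanity checks: at $\alpha=1$ the bracket collapses to $1+(1-\mu_0)/\mu_0=1/\mu_0$, recovering Lemma~\ref{lemma:alpha1}; at $\alpha=0$ it becomes $(1+a)^{-1}=\rho_0(0)$ by the definition of $a$, recovering part~(b) of Theorem~\ref{eq:Thorem_maxu(t)_0}.

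I do not expect a real obstacle here: once Lemma~\ref{eq:udpTranUtil} is available, the theorem is essentially a one-line substitution. The only place that needs care is the bookkeeping around the RED term $r$ --- Lemma~\ref{lemma-rho0(y)} still carries the factor $x_0(1-r)$ whereas Lemma~\ref{eq:udpTranUtil} is already written with $r$ dropped, so~\eqref{eq:neglectRED} must be applied consistently before the two exponents are matched. Everything else is routine algebra, and the monotonicity remark that pins the extreme at $\Delta T=\tau(b)$ relies only on $\beta<0$ and on $\tau(b)\ge\tau(y)$ from Lemma~\ref{eq:lemmaTauY}.
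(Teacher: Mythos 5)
Your proposal is correct and follows essentially the same route as the paper: evaluate Lemma~\ref{eq:udpTranUtil} at $\Delta T=\tau(b)$, and eliminate $e^{-x_0\beta\tau(b)}$ via the identity $(1-1/b)^{-x_0\tau(b)}=(1-\mu_0)/(a\mu_0)$, which the paper extracts from the $\alpha=1$ computation in Lemma~\ref{lemma:alpha1} and you equivalently extract from Lemma~\ref{lemma-rho0(y)} at $y=b$ (or from $\mu_0(0)=\mu_0$). Your added monotonicity remark pinning the extreme at $\Delta T=\tau(b)$ matches the paper's discussion immediately preceding the theorem.
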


    See the Appendix for proofs of Theorem \ref{eq:Thorem_maxu(t)_0Alpha} and Lemma \ref{lemma:alpha1}.
    \newtheorem{corollary}{Corollary}
     \begin{corollary}\label{cor:alpha0}
        For $\alpha=0$, or when the UDP flow stops, $\mu_0(\tau(b))=\rho_0(0)$.
    \end{corollary}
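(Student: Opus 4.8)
The plan is to obtain the statement as an immediate specialization of Theorem~\ref{eq:Thorem_maxu(t)_0Alpha}. First I would substitute $\alpha = 0$ into the extreme-utilization formula \eqref{eq:alphaU}. Since any strictly positive real number raised to the power $0$ equals $1$, the factor $\left(\frac{1-\mu_0}{a\mu_0}\right)^{\alpha}$ collapses to $1$, so that \eqref{eq:alphaU} reduces to $\mu_0(\tau(b)) = (1+a)^{-1}$.

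Next I would unfold the definition $a := \frac{1-\rho_0(0)}{\rho_0(0)}$ and simplify: $1 + a = \frac{\rho_0(0) + (1-\rho_0(0))}{\rho_0(0)} = \frac{1}{\rho_0(0)}$, and therefore $\mu_0(\tau(b)) = \rho_0(0)$, which is the claim. As a consistency check I would note the alternative route through Theorem~\ref{eq:Thorem_maxu(t)_0}(b): $\alpha = 0$ is exactly the case $x_{02} = 0$ (the UDP flow stops), for which that theorem already states the transient utilization attains its upper bound $\rho_0(0)$ precisely at time $\tau(b)$. The two arguments agree, so nothing further is required.

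There is no genuine obstacle here; the corollary is a one-line substitution into an already-proved theorem. The only point meriting a moment's care is the degenerate-exponent step: one should remark that the base $\frac{1-\mu_0}{a\mu_0}$ is strictly positive — which holds because $0 < \mu_0 < 1$ and $a > 0$ — so that $(\cdot)^0 = 1$ is a legitimate evaluation and not an indeterminate $0^0$. With that observation in place the proof is complete.
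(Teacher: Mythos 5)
Your proof is correct and is exactly the intended derivation: the paper states the corollary as an immediate consequence of Theorem~\ref{eq:Thorem_maxu(t)_0Alpha} without a separate proof, and substituting $\alpha=0$ into \eqref{eq:alphaU} to get $(1+a)^{-1}=\rho_0(0)$ is the natural one-line argument, consistent with Theorem~\ref{eq:Thorem_maxu(t)_0}(b) as you note. Nothing is missing.
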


    Theorem \ref{eq:Thorem_maxu(t)_0Alpha} is a key finding of this paper and can be visualized using Fig.~\ref{fig:uChokeTransientTheory}. Note that the figure illustrates both steady-state and (extreme) transient UDP utilizations for the selected UDP arrival rates. When $\alpha=1$, by Lemma \ref{lemma:alpha1}, the values shown are the steady-state utilization $\mu_0$ for the given UDP arrival rate $x_0$.  When the UDP arrival rate $x_0$  changes (i.e., $\alpha \neq 1$),  the graph shows how far the UDP utilization can go up/down when  $x_0$ abruptly decreases/increases, respectively, to $x_{02}$. For example, assume an initial UDP rate of $x_0=2C$. The utilization for this input is read from the figure at $\alpha=1$ (also from Fig.~\ref{fig:olm-steadystate}) as $\mu_0=25.0\%$. When $x_{02}=0.2C$ which corresponds to rate change by a factor of $\alpha=0.1$, the transient utilization surges to $\mu_0(\tau(b))=56.5\%$. If the flow stops, by Corollary \ref{cor:alpha0}, the utilization instead jumps to $\rho_0(0)=60\%$ (see also Fig.~\ref{fig:rhoyVSy}). Similarly, when a flow of initial UDP arrival rate $x_0=3C$ stops, the transient utilizations can surge to a whopping $67\%$  from the initial $21\%$. 

    \begin{figure}[thb!]
            \centering
            \includegraphics[width=0.485\textwidth]{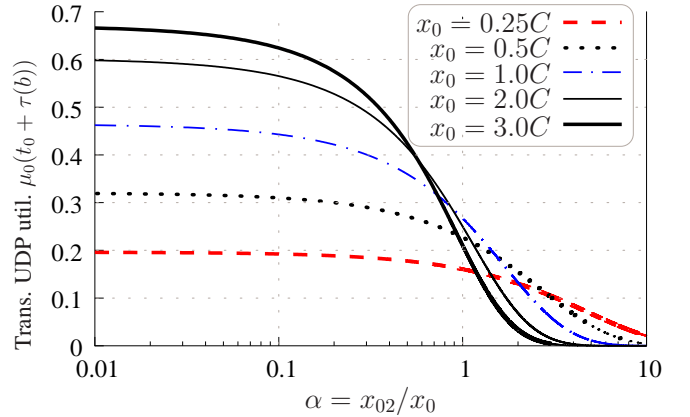}
            \caption{The impact on extreme UDP utilization of rate factor $\alpha$. Five previous inputs $x_0 \in\{0.25C,0.5C,1C, 2C, 3C\}$ are shown.}
            \label{fig:uChokeTransientTheory}
    \end{figure}

    \section{Evaluation}\label{sec:evaluation}

    In this section, we validate the results using simulations performed in ns-2.34. The network setup shown in Fig.~\ref{fig:sysModel} with the following settings is used: $C=20Mbps$ or 2500 pkt/sec, link latency $1ms$, buffer size $1000$ packets, $N=100$ TCP flows each of type SACK, RED buffer thresholds (in packets) $\min_{th}=20$ and $\max_{th}=1000$. Packet sizes are 1000 bytes. Flows start randomly on the interval [0,2] sec.

    We conducted extensive experiments,  each simulation replicated 500 times if not otherwise highlighted. The 95\% confidence intervals are so small that they are not reported. We remark that in computing the simulation results, unless otherwise stated, we have used a time window of 1ms. Since $C$=2500 pkts/sec, this is 2.5 times more than the per packet transmission interval assumed by the model, but the error due to this disparity is small and can be ignored.
    Section \ref{subsec:modelValidate} presents the validation of the model, and Section \ref{subsec:furtherResults} presents additional simulation results.

    \subsection{Model Validation}\label{subsec:modelValidate}
    In this section, we validate the two important results of this paper: Theorem \ref{eq:Thorem_maxu(t)_0Alpha} and Lemma ~\ref{eq:udpTranUtil}.

    \subsubsection{Validation of Theorem~\ref{eq:Thorem_maxu(t)_0Alpha}}
    In Fig.~\ref{fig:uChokeTransient}, we show for selected initial UDP arrival rates the impact of rate change by factor $\alpha\in[0.01,10]$. As can be seen, the simulation results accurately match the model predictions. For instance, for $x_0=3C$ and $x_{02}=0.03C$, or $\alpha=0.01$, the maximum utilizations obtained by the model and simulations are 67\% and 65\%, respectively.

    \begin{figure}[thb!]
            \centering
            \includegraphics[width=0.485\textwidth]{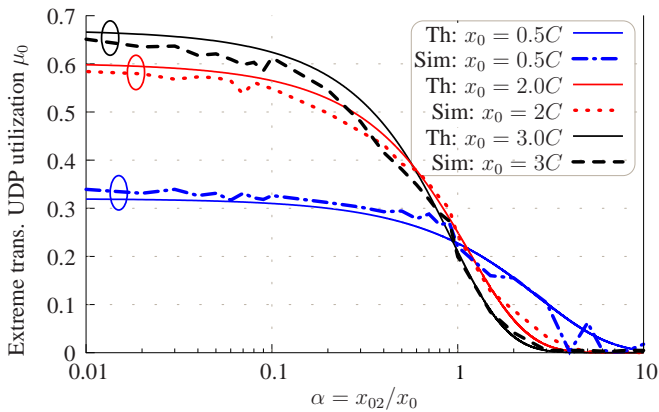}
            \caption{Validation of extreme utilization stated by Th.~\ref{eq:Thorem_maxu(t)_0Alpha}.}
            \label{fig:uChokeTransient}
    \end{figure}

\vspace{5mm}

    \subsubsection{Validation of Lemma~\ref{eq:udpTranUtil}}
    For rate factors of $\alpha=5,~1/5,~10, ~1/10$, Fig.~\ref{fig:validateTh1} shows the evolution of transient UDP utilizations obtained through simulation and the analytical model stipulated by Lemma~\ref{eq:udpTranUtil}. The steady state backlog size $b$ required for the theoretical plot is taken from the steady state simulation just before rate change.

    \begin{figure}[htb!]\centering
          \subfloat[$x_0=0.25C$, $\alpha=5$]{\label{fig:validatealpha5X}\includegraphics[width=0.4\textwidth]{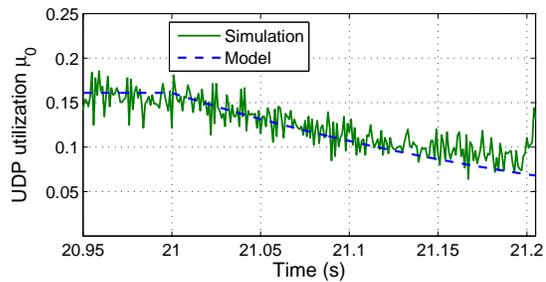}}
          \vspace{2mm}
          \subfloat[$x_0=3C$, $\alpha=1/5$]{\label{fig:validatealpha02X}\includegraphics[width=0.4\textwidth]{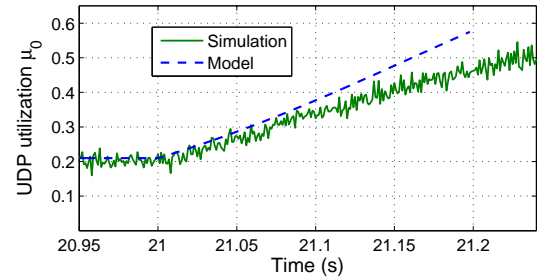}}
          \vspace{2mm}
          \subfloat[$x_0=0.25C$, $\alpha=10$]{\label{fig:validatealpha10X}\includegraphics[width=0.4\textwidth]{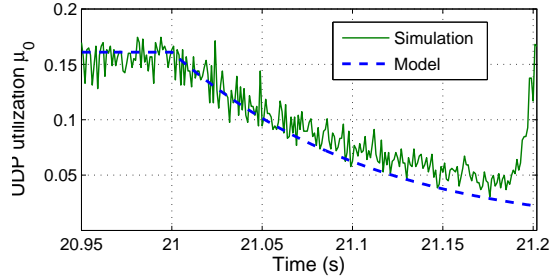}}
          \vspace{2mm}
          \subfloat[$x_0=3C$, $\alpha=1/10$]{\label{fig:validatealpha01X}\includegraphics[width=0.4\textwidth]{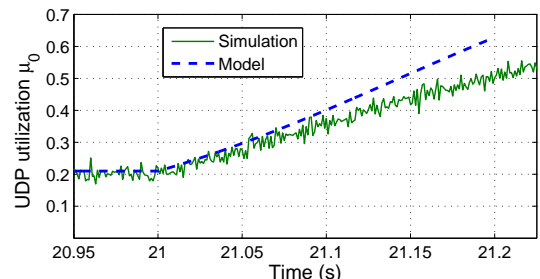}}
          \caption{Validating the transient utilization Eq.~\eqref{eq:newThinnedu_02}}
          \label{fig:validateTh1}
     \end{figure}

    As discussed earlier, there may be two sources of approximation errors for the theoretical results. First, the conservative assumption of constant backlog size during the transient phase. Second, the difference in measurement intervals. The model results are tallied per packet transmission time ($0.4$ms), while simulation results are based on $1$ms interval. Despite these differences, the model and simulation results are reasonably matching, even more so for moderate initial $x_0$ (see Fig.~\ref{fig:validatealpha5X}). In all figures, the approximation errors are negligible at the beginning. As we move further in time during the transient phase, however, the assumption of constant buffer occupancy $b$ fails to hold. Consequently, the errors become larger.  In the figure, the model can be treated as a bound of the respective transient utilization. Additionally, the extreme utilizations during the transient regime (i.e., the lowest utilization in Figs.~\ref{fig:validatealpha5X} and \ref{fig:validatealpha10X}, and highest utilization in Figs.~\ref{fig:validatealpha02X} and \ref{fig:validatealpha01X}), which can also be verified from the generic utilization curve shown in Fig.~\ref{fig:uChokeTransient}, are very close to the simulation results. 

    \subsection{Miscellaneous Results}\label{subsec:furtherResults}

     The next three subsections present additional results. The first two return to the motivational examples discussed in Sec.~\ref{sec:motivationBackgnd}. The results in Sec.~\ref{subsub:web} are based on a different traffic model---Web Traffic.

    \subsubsection{Results on Example 1}
    For the two experiments of Example 1 discussed in Sec.~\ref{subsec:motivation}, Table~\ref{tab:exp1} tabulates three sets of extreme UDP utilization values:  theoretical values based on Theorem~\ref{eq:Thorem_maxu(t)_0Alpha};  and two sets of simulation values averaged over measurement windows of $0.4$ms and $10$ms. Note that since transient utilization is continuously changing, the time granularity of measurement windows are critical (see also Sec.~\ref{subsub:exObservation}). In Fig.~\ref{fig:4x12xUtil}, the curves are based on a $10$ms window. Compared with simulation results,  the theoretical values in Table~\ref{tab:exp1} seem to represent \emph{lower} utilization bound when  $\alpha>1$ and \emph{upper} utilization bounds when $\alpha<1$. Nevertheless, the theoretical and simulation values based on packet transmission time window are remarkably close, and the bounds are tight. 

    \begin{table}[th]\centering
        \caption{Lowest and highest transient UDP utilizations for the experiments of Example 1 in Sec.~\ref{sec:motivationBackgnd}.}
        \label{tab:exp1}
        \begin{tabular}{  l  l  r  | r  r  r }
        \hline
        \multicolumn{3}{ c }{Scenarios}                                                                           &  \multicolumn{3}{| c }{Extreme UDP utilization}     \\
        \hline
        \multirow{2}{*}{$x_0/C$} &   \multirow{2}{*}{$x_{02}/{C}$} &      \multirow{2}{*}{$\alpha$}      & \multirow{2}{*}{ Model}     &\multicolumn{2}{ c } {Simulation}  \\
        \cline{5-6}
                                 &                               &                                     &                    & W=0.4ms       & W=10ms                \\
        \hline
        0.50                     &                     2.0       &          4                           &    6.7\%            &     7.5\%       &  10.5\%               \\
        2.0                      &                     0.50      &          1/4                         &    50.8\%           &     49.3\%      &  44.5\%                \\
        0.25                     &                     3.0       &          12                          &    1.3\%            &      2.0\%      &  3.8\%              \\
        3.0                      &                     0.25      &          1/12                        &    63.2\%           &      61.4\%     &  56.7\%                \\
        \hline
        \end{tabular}
    \end{table}

    \subsubsection{Results on Example 2}\label{subsubsec:example2}
     Next, for the motivating example in Sec.~\ref{subsec:motivation} where the UDP arrival rate alternates between $1$C and $10$C, the two extreme utilization values using Theorem~\ref{eq:Thorem_maxu(t)_0Alpha}  are found to be 0.015\% (for change from $1$C to $10$C) and 75\% (for change from $10$C to $1$C). Similar to what has been observed in Example 1, the extreme values are tight bounds. Even using a gross measurement window of 10ms for Fig.~\ref{fig:x0flap1C10C}, we still observe a peak utilization of 72\%. Besides the extreme values, the model allows us to explore the average utilization in a time period within the transient regime. To demonstrate this, Table~\ref{tab:exp2} shows average utilizations (per every $500$ms) using four methods: (1) steady state utilization corresponding to average UDP arrival rate $5.5$C, (2) average of steady state utilizations corresponding to UDP arrival rates $1$C and $10$C, (3) average utilization based on the model, i.e.,  Lemma~\ref{eq:udpTranUtil}, and (4) simulation results, using measurement time windows of $1$ms. Steady state values for the first two methods are based on the OLM model described in Sec.~\ref{subsub:olmModel}. For using Lemma~\ref{eq:udpTranUtil}, the values for backlog size $b$ at each 250ms interval are required. Our extensive simulations show that the backlog size is wildly changing over time. We took the backlog size at the onset of rate change and fed it as the input $b_0$ into  Lemma~\ref{eq:udpTranUtil}. For example, at $t=21$ $b_{0}=765$ and at $t=21.25$  $b_0=675$.  As the table shows, the steady-state analytical bounds are far from the simulation results. On the contrary, the transient analysis  nicely represents the picture of the queue even under radically changing traffic conditions.

    \begin{table}[th]\centering
        \caption{Average UDP utilizations for Example 2 in Sec.~\ref{sec:motivationBackgnd}.}
        \label{tab:exp2}
        \begin{tabular}{  l | r  r  r  r }
        \hline
        Time            &      \multicolumn{4}{ c }{Average UDP utilization in \%} \\
        Interval                               & $\mu_{0,5.5C}$ &  $\frac{\mu_{0,1C}+\mu_{0,10C}}{2}$    & Model &  Simulation   \\
        \hline
        $[21,21.5)$                       & 11.7   & 14.8  & 17.6  &   18.9              \\
        $[21.5,22)$                      &  11.7   & 14.8  & 19.1  &   19.4              \\
        $[22,22.5)$                      &  11.7   & 14.8  & 17.6   &  18.9              \\
        $[22.5,23)$                      &  11.7   & 14.8  & 18.3  &   19.4              \\
        $[23,23.5)$                      &  11.7   & 14.8  & 18.0  &   19.3              \\
        \hline
    \end{tabular}
    \end{table}

    \subsubsection{Results using Web traffic}\label{subsub:web}
    Since the Internet flow dynamics is heavily shaped by short Web transfers, we conducted a 500-replicated experiment explained below. The UDP arrival pattern is the same as in Fig.~\ref{fig:x0flap1C10C} except that $x_0=10C$ for $t< 21$ and $x_0=1C$ for  $ t> 23$. The Web traffic is modeled as follows: Starting from $t=20$s, each of the 100 TCP sources (see Fig.~\ref{fig:sysModel}) generates a Poisson process with an average arrival rate of 25 Hz. The size of each session (file) is Pareto-distributed with average size of 10kB (about 10 packets) and a shape parameter of 1:3. This model captures the heavy tailed nature of Web file sizes and their transmission times~\cite{selfSimilarity97}. Simulation lasts for 25 seconds and over 9000 Web sessions have been generated. The result is illustrated in Fig.~\ref{fig:webUtilization}.

    \begin{figure}[thb!]
            \centering
            \includegraphics[width=0.485\textwidth]{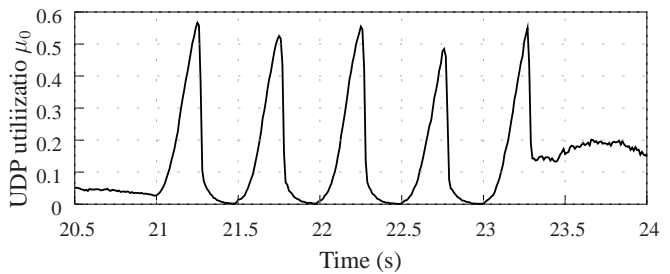}
            \caption{UDP utilization in the presence of Web flows.}
            \label{fig:webUtilization}
    \end{figure}


    Due to the huge and highly bursty  Web traffic generated, the buffer  is always full. Like in the long-lived TCP scenarios, UDP exhibits widely fluctuating throughput patterns during transient regimes. However, the extreme points of transient regime are generally lower in  value. For example, UDP utilization can get as low as 0.08\%, and as high as 56\%. Nevertheless, in both cases, they are bounded, albeit \emph{loosely} by the analytical extreme values 0.015\% and 72\% as discussed in Sec.~\ref{subsubsec:example2}. We believe that the smaller values are due to higher ambient drop rates $r$ (cf. $r$ in \eqref{eq:neglectRED}) caused by RED dealing with persistent full buffer occupancy. To close the difference, further study may be conducted where the rate conservation argument could be exploited.


    The results in this section show that the analytical results and observations made in this paper may apply to a wider context than studied here.

     \section{Conclusion}\label{sec:conclusion}

     While existing works on CHOKe reveal interesting structural, asymptotic and limit behaviors of the queue, their results are limited to the steady state when the queue reaches equilibrium in the presence of many long-lived TCP flows and constant rate UDP flows. Unfortunately, they lack showing properties of the queue in a possibly more realistic network setting where the exogenous rates of unresponsive flows may be dynamically changing, and consequently the model parameters, rather than being static, may be continuously evolving.

     This paper provides the first study on CHOKe behavior in the aftermath of rate changes in UDP traffic arrival. In particular, we are concerned with CHOKe queue behaviors during the transient regime which we model as a transition from one steady queue state to another. We found that the performance limits stipulated in the steady state rarely hold for such transient regimes. Depending on the nature of rate change, the queue exhibits instant fluctuations of UDP bandwidth sharing in reverse direction. This behavior has ramifications on the smooth operation of the Internet where most flows are rate-adaptive. Such flows may see fluctuating available link bandwidth and degrade in performance. By extending and leveraging the spatial distribution model, this paper analytically (1)  determines the extreme points of UDP utilization (observed within an order of queueing delay after rate change), and (2) tracks the  evolution of the transient UDP utilization following rate change. In addition, the model allows us to obtain generic UDP utilization plots that help explain both the transient extreme characteristics and steady-state characteristics. The analytic results have been rigorously validated through extensive simulations. We believe, the analytical approach used in this paper also sheds light on studying transient behaviors of other leaky queues.

    \bibliographystyle{IEEEtran}

    \bibliography{ChokeTransient-final}

\begin{thebibliography}{10}
\providecommand{\url}[1]{#1}
\csname url@samestyle\endcsname
\providecommand{\newblock}{\relax}
\providecommand{\bibinfo}[2]{#2}
\providecommand{\BIBentrySTDinterwordspacing}{\spaceskip=0pt\relax}
\providecommand{\BIBentryALTinterwordstretchfactor}{4}
\providecommand{\BIBentryALTinterwordspacing}{\spaceskip=\fontdimen2\font plus
\BIBentryALTinterwordstretchfactor\fontdimen3\font minus
  \fontdimen4\font\relax}
\providecommand{\BIBforeignlanguage}[2]{{%
\expandafter\ifx\csname l@#1\endcsname\relax
\typeout{** WARNING: IEEEtran.bst: No hyphenation pattern has been}%
\typeout{** loaded for the language `#1'. Using the pattern for}%
\typeout{** the default language instead.}%
\else
\language=\csname l@#1\endcsname
\fi
#2}}
\providecommand{\BIBdecl}{\relax}
\BIBdecl

\bibitem{e2eArgument84}
J.~H. Saltzer, D.~P. Reed, and D.~D. Clark, ``End-to-end arguments in system
  design.'' \emph{ACM Transactions on Computer Systems}, vol.~2, 1984.

\bibitem{RFC5681}
M.~Allman, V.~Paxson, and E.~Blanton, ``{TCP Congestion Control},'' RFC 5681,
  Sep 2009.

\bibitem{demersWFQ89}
A.~Demers, S.~Keshav, and S.~Shenker, ``Analysis and simulation of a fair
  queueing algorithm,'' in \emph{Proc. of ACM/SIGCOMM}, 1989.

\bibitem{REDfloyd93}
S.~Floyd and V.~Jacobson, ``Random early detection gateways for congestion
  avoidance,'' \emph{IEEE/ACM ToN}, vol.~1, no.~4, 1993.

\bibitem{adaptiveRED2001}
S.~Floyd, R.~Gummadi, and S.~Shenker, ``Adaptive {RED}: An algorithm for
  increasing the robustness of {RED}'s {A}ctive {Q}ueue {M}anagement,'' Tech.
  Report, 2001.

\bibitem{keshavBook97}
S.~Keshav, \emph{An engineering approach to computer networking: ATM networks,
  the Internet, and the telephone network}.\hskip 1em plus 0.5em minus
  0.4em\relax Addison-Wesley Longman Publishing Co, 1997.

\bibitem{kortebiSIGMETRICS05}
A.~Kortebi, L.~Muscariello, S.~Oueslati, and J.~Roberts, ``Evaluating the
  number of active flows in a scheduler realizing fair statistical bandwidth
  sharing,'' in \emph{Proc. of ACM/SIGMETRICS}, 2005.

\bibitem{flowBufferJSAC99}
B.~Suter, T.~Lakshman, D.~Stiliadis, and A.~Choudhury, ``{Buffer Management
  Schemes for Supporting TCP in Gigabit Routers with Per-flow Queueing},''
  \emph{IEEE Journal on Selected Areas in Communications}, vol.~17, no.~6, Jun
  1999.

\bibitem{esheteAFpFT11}
A.~Eshete and Y.~Jiang, ``{Approximate Fairness Through Limited Flow List},''
  \emph{International Teletraffic Congress}, Sep 2011.

\bibitem{perFlowDQS07}
C.~Hu, Y.~Tang, X.~Chen, and B.~Liu;, ``{Per-Flow Queueing by Dynamic Queue
  Sharing},'' \emph{IEEE INFOCOM}, 2007.

\bibitem{fred97}
D.~Lin and R.~Morris, ``Dynamics of random early detection,'' in \emph{ACM
  SIGCOMM Computer Communication Review}, vol.~27, no.~4, 1997.

\bibitem{ratulICNP01}
R.~Mahajan, S.~Floyd, and D.~Wetherall, ``Controlling high-bandwidth flows at
  the congested router,'' in \emph{IEEE ICNP}, 2001.

\bibitem{chokeInfocom00}
R.~Pan, B.~Prabhakar, and K.~Psounis, ``{CHOKe} - a stateless active queue
  management scheme for approximating fair bandwidth allocation.'' in
  \emph{IEEE Infocom}, 2000.

\bibitem{ChokeToN04}
A.~Tang, J.~Wang, and S.~H. Low, ``{Understanding CHOKe: Throughput and Spatial
  Characteristics},'' \emph{IEEE/ACM ToN}, vol.~12, no.~4, 2004.

\bibitem{ChokeSigmetrics03}
J.~Wang, A.~Tang, and S.~H. Low, ``Maximum and asymptotic udp throughput under
  {CHOKe},'' in \emph{ACM SIGMETRICS}, 2003.

\bibitem{allertonCHOKe01}
R.~Pan, C.~Nair, B.~Yang, and B.~Prabhakar, ``Packet dropping schemes, some
  examples and analysis.'' in \emph{Allerton Conference on Communication,
  Control and Computing}, 2001.

\bibitem{selfSimilarity97}
M.~E. Crovella and A.~Bestavros, ``Self-similarity in world wide web traffic:
  evidence and possible causes,'' in \emph{ACM SIGMETRICS}, 1996.

\end{thebibliography}

\newpage
    \appendix

     This appendix is devoted to proving the lemmas and theorems in Sec.~\ref{subsec:modifiedSDM}.

     We need the following intermediate results.

     From \eqref{eq:qDelay} and \eqref{eq:v(y)}, respectively, we get
       \begin{align}
            \tau^{'}(y) &= 1/v(y) \label{eq:diffTau} \\
            v^{'}(y)/v(y)&=\rho^{'}_0(y)/(1-\rho_0(y)). \label{eq:diffVyVy}
       \end{align}
     Using $\rho_0^{'}(y)$ from \eqref{eq:diffVyVy} into \eqref{eq:lnThinningv1} and solving for $v^{'}(y)$,
        \begin{equation}
            v^{'}(y)= x_0(1-r)\beta \rho_0(y). \label{eq:diffVy}
        \end{equation}

    \subsection*{Proof of Lemma \ref{lemma-rho0(y)}}
    \vspace{3mm}

    \begin{proof}
    The following set of equations are trivial and follow from \eqref{eq:lnThinningv2}.
    \begin{align}
         x_0(1-r) \beta \int\limits_{0}^y\frac{1}{v(s)}ds &=  \int\limits_{0}^y \frac{\rho_0^{'}(s)}{\rho_0(s)}ds+ \int\limits_{0}^y\frac{\rho_0^{'}(s)}{1-\rho_0(s)}ds \nonumber
    \end{align}
    \begin{align}
            x_0(1-r) \beta \tau(y) &= \ln(\rho_0(s))|_{0}^{y}-\ln(1-\rho_0(s))|_{0}^{y} \nonumber \\
            x_0(1-r)\beta \tau(y) &=\ln\bigg[\frac{\rho_0(y)}{\rho_0(0)}\frac{1-\rho_0(0)}{1-\rho_0(y)}\bigg] \nonumber \\
            \tau(y)&=\frac{1}{x_0(1-r)\beta} \ln\bigg[\frac{\rho_0(y)}{1-\rho_0(y)}\frac{1-\rho_0(0)}{\rho_0(0)}\bigg] \nonumber \\
            \tau(y)&= \frac{1}{x_0(1-r)\beta} \ln\bigg[a\frac{\rho_0(y)}{1-\rho_0(y)}\bigg] \label{eq:lnThinningv3}
    \end{align}

From \eqref{eq:lnThinningv3}, $\rho_0(y)$ can be expressed as \eqref{eq:rho_0y}.
\end{proof}




    \subsection*{Proof of Lemma~\ref{lemma:intermediateR}}
    \vspace{3mm}
     \begin{proof}
     From \eqref{eq:invariance4} or \eqref{eq:olm_input}, practical values of $h_0$ must satisfy $h_0 < 0.5$.  For $h_0=0$ and $\mu_0=0$, the proof is trivial. We only need to prove the bounds of $\frac{1-\mu_0}{1-h_0}$ for $h_0 \in(0, 0.5)$.

     We need \eqref{eq:olm_u0} and the well known property of natural logarithms shown next.
     \begin{align}
            \ln x &\leq x-1             \qquad      \mbox{ for $x >0$}    \label{eq:naturalLogBound}
     \end{align}

     The proof is by contradiction and has two parts.
     \begin{enumerate}[(i)]
     \item \label{p1} First, we establish that $(1-\mu_0)/(1-h_0)\geq 1$. Let us assume:
     \begin{equation}\label{eq:h0LTu0}
         \frac{1-\mu_0}{1-h_0} < 1 \qquad  \Rightarrow \qquad h_0 < \mu_0.
     \end{equation}

    From \eqref{eq:h0LTu0} and \eqref{eq:olm_u0}, we get
    \begin{equation}\label{eq:hoLTu0v2}
        h_0 < \frac{\ln[\frac{1-h_0}{1-2h_0} ]}{ [ \frac{1-h_0}{1-2h_0} ]+ \ln[\frac{1-h_0}{1-2h_0} ] }
    \end{equation}

    \newpage

     Since $0<h_0<0.5$, $(1-h_0)/(1-2h_0)>1$ and $\ln[(1-h_0)/(1-2h_0)]> 0$. Multiplying both sides of \eqref{eq:hoLTu0v2} by  the denominator term found on the r.h.s. and simplifying, we obtain:
     \begin{equation}\label{eq:contradition}
       \frac{h_0}{1-2h_0} < \ln\frac{1-h_0}{1-2h_0}.
     \end{equation}
     On the other hand, since $\frac{1-h_0}{1-2h_0}>0$, we can apply property \eqref{eq:naturalLogBound} on $\frac{1-h_0}{1-2h_0}$ to obtain:
     \begin{equation}
       \ln\frac{1-h_0}{1-2h_0} \leq \frac{1-h_0}{1-2h_0} -1 = \frac{h_0}{1-2h_0}
     \end{equation}
     which is a contradiction to \eqref{eq:contradition} .
    Hence, we prove that $\frac{1-\mu_0}{1-h_0}\geq 1$. $\frac{1-\mu_0}{1-h_0}= 1$ when both $\mu_0,~h_0=0$.

     \item \label{p2} Here we establish that $(1-\mu_0)/(1-h_0)\leq 2$. As before, let us contradict by assuming that,
         \begin{equation}\label{eq:m0h0UpperBound}
         \frac{1-\mu_0}{1-h_0} > 2  \qquad  \Rightarrow \qquad \mu_0 < 2h_0-1.
     \end{equation}

     Since $h_0<0.5$, \eqref{eq:m0h0UpperBound} says that $\mu_0<0$, which is not possible. This means the assumption in \eqref{eq:m0h0UpperBound} must be wrong, or that $(1-\mu_0)/(1-h_0)\leq 2$.

     Combining (\ref{p1}) and (\ref{p2}) completes the proof.
     \end{enumerate}
   \end{proof}

    \subsection*{Proof of Lemma~\ref{eq:lemmaRho_0Y}}
    \vspace{3mm}
     \begin{proof}
     Solving for $\rho_0^{'}(y)$ from \eqref{eq:diffVyVy}  and \eqref{eq:diffVy},
        \begin{align}
            \rho_0^{'}(y)   &= x_0(1-r)\beta \bigg[\frac{\rho_0(y)-\rho_0(y)^2}{v(y)} \bigg]            \label{eq:diffRhoyv1}
        \end{align}
        Since $\beta < 0$, and probabilities $\rho_0(y)< 1$, $\rho_0(y)$ is decreasing with $y$. Taking the differentiation further and using \eqref{eq:diffVy},\eqref{eq:diffRhoyv1} in place of $v^{'}(y)$ and $\rho_0^{'}(y)$ and simplifying, we obtain
        \begin{align}
            \rho_0^{''}(y)  &= x_0(1-r)\beta \frac{\rho_0^{'}(y)v(y)(1-2\rho_0(y))-v^{'}(y)\rho_0(y)(1-\rho_0(y)) }{v^2(y)} \nonumber \\
                            &= x_0^2(1-r)^2\beta^2 \frac{\rho_0(y)(1-\rho_0(y))(1-3\rho_0(y))}{v^2(y)} \label{eq:diffdiffRhoy}
        \end{align}
    The critical point $(y^*,\rho_0(y^*))$ where $\rho_0^{''}(y)=0$ is given as,
    \begin{align}
        \rho_0^*    &=\rho_0(y^*)= \frac{1}{3} \label{eq:criticalRho_0Y} \\
        y^*         &= \frac{1}{K} \ln\bigg(\frac{a}{2}\bigg)+\frac{1}{K} \frac{1-3\rho_0(0)}{2(1-\rho_0(0))}. \label{eq:criticalY}
    \end{align}

    \eqref{eq:criticalY} is obtained upon substituting $\rho_0^*$ for $\rho_0(y)$ in \eqref{eq:qPositionYv1}.
    It is easy to see that $\rho_0(y)$ decreases in concave fashion on $y\in [0,y^*]$ and in convex fashion on $y\in[y^*,b]$. See Fig.~\ref{fig:rhoyVSy} for an example.

    Note that the critical point ($y^*, \rho_0^*$) exists when the UDP arrival rate $x_0$ exceeds a certain value $x_0^*$, calculated using \eqref{eq:rho_0(0)} as,
    \begin{align}\label{eq:minimXcriticalPoint}
        x_0^*&= \min\{x_0~ |~ \rho_0(0) \geqslant \rho_0^*\}= \min\bigg[ \frac{C}{2}\frac{1-\mu_0}{1-h_0}\bigg]= \frac{C}{2}.
    \end{align}

    Above, we use Lemma \ref{lemma:intermediateR} to state that $\frac{1-\mu_0}{1-h_0} \geqslant 1$. From \eqref{eq:minimXcriticalPoint}, it follows that when $x_0 \leqslant C/2$, $\rho_0(y)$ is strictly convex decreasing.

     For arrival rate $x_0\in[C/2,\infty)$, since $\rho_0(0)\geqslant \rho_o^*$ by \eqref{eq:minimXcriticalPoint} and $\rho_0(b):=\mu_0 \leqslant 26.9\% \leqslant \rho_0^*$ by the Limit property (see Sec.~\ref{sec:intro}), the critical point exists somewhere $y^*\in[0,b]$.

     \end{proof}

    \subsection*{Proof of Lemma~\ref{eq:lemmavY}}
    \vspace{3mm}
    \begin{proof}
    From \eqref{eq:diffVy}, it is trivial to see that $v(y)$ is  decreasing with $y$ since  $v^{'}(y)<0$. Differentiating \eqref{eq:diffVy} and using \eqref{eq:diffRhoyv1},
    \begin{align}
        v^{''}(y)&=x_0^2(1-r)^2\beta^2 \rho_0(y)\frac{1-\rho_0(y)}{v(y)}
    \end{align}
    Since $v''(y)>0$, $v(y)$ is convex decreasing throughout the queue. See Fig.~\ref{fig:vyVSy}.
    \end{proof}

    \subsection*{Proof of Lemma~\ref{eq:lemmaTauY}}
    \vspace{3mm}
     \begin{proof}
     The queueing delay is a strict convex increasing function, since from \eqref{eq:diffTau} and \eqref{eq:diffVy}, respectively,
      \begin{align}
          \tau^{'}(y) &= 1/v(y) > 0 \label{eq:diffTauv2} \\
          \tau^{''}(y) &= -\frac{1}{v^2(y)}v^{'}(y)=-\frac{x_0(1-r)\beta \rho_0(y)}{v^2(y)}>0 \label{eq:diffdiffTau}
      \end{align}
      \end{proof}

      \subsection*{Proof of Lemma~\ref{lemma:alpha1}}
      \vspace{3mm}
     \begin{proof}
    Set $\Delta T= \tau(b)$ and $\alpha=1$, i.e., $x_{02}=x_0$ in Lemma~\ref{eq:udpTranUtil}.
    \begin{subequations}
    \begin{align}
        \mu_0(\tau(b))|_{\alpha=1}&= \frac{1}{1+ae^{-x_0\tau(b)\beta}}=\frac{1}{1+ae^{-x_0\tau(b)\ln(1-1/b)}} \nonumber \\
        &=\frac{1}{1+a(1-1/b)^{-x_0\tau(b)}} \label{subeqn1} \\
        &=\frac{\rho_0(0)}{\rho_0(0)+(1-\rho_0(0))(1-1/b)^{-x_0\tau(b)}} \label{subeqn2} \\
        &=\frac{\rho_0(0)v(0)(1-1/b)^{x_0\tau(b)}}{\rho_0(0)v(0)(1-1/b)^{x_0\tau(b)}+N\rho_1(0)v(0)} \label{subeqn3} \\
       &=\frac{\rho_0(b)v(b)}{\rho_0(b)v(b)+N\rho_1(0)v(0)}=\mu_0  \label{subeqn4}
    \end{align}
    \end{subequations}
    In the above step, we use $\rho_0(b)=\mu_0$, $v(b)=C$ from \eqref{eq:icAtb}, and $N\rho_1(0)v(0)=(1-\mu_0)C$ from \eqref{eq:tcpThroughput}.
    \subsection*{Proof of Theorem~\ref{eq:Thorem_maxu(t)_0Alpha}}
    Similar to proof of Lemma \ref{lemma:alpha1}, we proceed for the general $\alpha$ as,
    \begin{equation}
        \mu_0(\tau(b))=\frac{1}{1+a(1-1/b)^{-\alpha x_0\tau(b)}} \label{eq:gmu_0}
    \end{equation}
    But from proof of Lemma~\ref{lemma:alpha1} above, we find for $\alpha=1$ that $\mu_0=1/(1+a(1-1/b)^{-x_0\tau(b)})$. After rearranging,  we obtain $(1-1/b)^{-x_0\tau(b)}=(1-\mu_0)/a\mu_0$. Substituting this into \eqref{eq:gmu_0} for general $\alpha$ completes the proof.
    \end{proof}

\end{document}